\title{\vspace{-3cm} Learning Robust Treatment Rules for Censored Data}
\author{
Yifan Cui\thanks{Zhejiang University}~~~Junyi Liu\thanks{Tsinghua University}~~~Tao Shen\thanks{National University of Singapore}~~~Zhengling Qi\thanks{George Washington University}~~~Xi Chen\thanks{New York University}
}
\date{}
\newtheorem{remark}{Remark}%[section]
\newtheorem{theorem}{Theorem}[section]
\newtheorem{lemma}{Lemma}[section] %[section]
\newcommand{\cX}{{\mathcal{X}}}
\newcommand{\sign}{{\text{sign}}}
\newcommand{\cL}{{\mathcal{L}}}
\newcommand{\cH}{{\mathcal{H}}}
\newcommand{\cF}{{\mathcal{F}}}
\newcommand{\Eb}{\mathbb{E}}
\newcommand{\Xb}{X}
\newcommand{\argmax}{\mathop{\mathrm{argmax}}}
\def\calH{{\cal H}}
\newcommand{\wh}{\widehat}
\newcommand{\wt}{\widetilde}
\newcommand{\st}[1]{\textcolor{black}{#1}\xspace}
\definecolor{DSgray}{cmyk}{0,1,0,0}
\begin{document}
\maketitle
\vspace{-1cm}
\abstract{
There is a fast-growing literature on estimating optimal treatment rules directly by maximizing the expected outcome.
In biomedical studies and operations applications, censored survival outcome is frequently observed, in which case the truncated mean survival time and survival probability are of great interest.
In this paper, we propose two robust criteria for learning optimal treatment rules with censored survival outcomes; the former one targets an optimal treatment rule maximizing the truncated mean survival time, where the cutoff is specified by a given quantile such as median; the latter one targets an optimal treatment rule maximizing buffered survival probabilities, where the predetermined threshold is adjusted to account for the truncated mean survival time.
We develop a sampling-based difference-of-convex algorithm for learning the proposed optimal treatment rules, and provide theoretical justifications for them.
In simulation studies, our estimators show improved performance compared to existing methods.  We also demonstrate the proposed method using AIDS clinical trial data.

}

\vspace{0.1cm}

\noindent {\bf keywords:}
Buffered probability of exceedance, Censoring, Conditional value-at-risk, Difference-of-convex algorithm, Robust treatment rules

\section{Introduction}
An individualized treatment rule provides a personalized treatment strategy for each patient in the population based on their individual characteristics. Recently, a significant amount of work has been devoted to estimating optimal treatment rules, e.g., see \cite{ qian2011performance,zhao2012estimating,zhang2012robust,chakraborty2013statistical,kitagawa2018should,kosorok2019precision,tsiatis2019dynamic,athey2021policy,Cui2021Individualized,stensrud2024optimal}. 
In clinical trials and biomedical studies, right-censored survival outcome is frequently encountered. However, learning optimal treatment rules with censored data has mostly been studied under a welfare maximization framework in the literature.

To motivate our criteria, let us consider practical examples in biomedical studies, where we are concerned about the life expectancy or a particular clinical measure of patients. First, in oncology we may recommend radiotherapy (treatment $A$) to cancer patients based on patients' characteristics (i.e., covariates $X$). While traditional approaches might consider the overall survival time (i.e., outcome $T$) as the primary objective, in this context, one might pay more attention to a nuanced measure that focuses on survival time among the group with higher risk and lower survival rate (i.e., $V^1(d)$ introduced later). 
Second, in studies with biomarker endpoints such as high-density lipoprotein (HDL) cholesterol levels (i.e., outcome $T$), which helps remove cholesterol from the bloodstream, we consider adjusting fibrates dosage (i.e., treatment $A$) based on each individual's characteristics (i.e., covariates $X$) to achieve a better clinical outcome. One may seek to maximize the probability that the HDL cholesterol level exceeds a clinically meaningful threshold (i.e., $V^2 (d)$ introduced later), thereby providing a great chance to obtain higher levels of HDL cholesterol and leading to a lower risk of heart disease.

We consider another supply chain application involving a supplier and a retailer. Initially, the supplier sets a wholesale price for the retailer, who then determines whether to increase or decrease the order quantity (i.e., treatment $A_r$) based on the price and other covariates. Under such a setup, the retailer, who is often conservative, is concerned about the worst-case profit (i.e., outcome $T_r$) distribution and aims to manage extreme losses, making the proposed first criterion particularly relevant as the criterion focuses on earnings falling below a target level (i.e., $V^1(d_r)$) and prioritizes the worse off. In subsequent periods, the supplier re-designs the wholesale price, for example, increasing or decreasing the price (i.e., treatment $A_s$) according to the previous order quantity and other information, creating an analogous supplier-led Stackelberg game. The supplier, with a high probability of earning profit (i.e., outcome $T_s$) in a safe range, aims to maximize the likelihood of going beyond a baseline profit threshold (i.e., $V^2(d_s)$), making the proposed second criterion highly suitable.

 In addition, censored data is ubiquitous in operations and medical applications. For example, inventory limits and wholesale price caps can induce censoring in observed order quantities and prices, while administrative study end times and patient dropout routinely also lead to censoring in time-to-event outcomes. 
 These settings therefore call for methods that not only adopt new criteria beyond the mean, but also handle censoring in a principled way. Indeed, numerical examples reveal that the mean-optimal treatment rule may work poorly (or even have a detrimental effect) at the tails;
See Table~1 in \cite{wang2017quantile} for further illustration.
Thus, in a variety of applications that focus on non-utilitarian welfare or prioritize the worse off, criteria other than the mean may be more attractive, particularly if the outcome has a skewed distribution \citep{wang2017quantile,qi2019siam,leqi2021median,cui2023individualized}. Unlike in the case of continuous outcomes, there has been less work on estimating robust treatment rules for survival data under censoring, which is commonly seen in practice. Furthermore, the interpretability of robust treatment rules is crucial for reliable and effective treatment recommendations.

In this paper, we propose two robust criteria to estimate optimal treatment rules for censored survival data in order to control the lower tail of the subjects' survival outcomes. The former targets an optimal treatment rule that maximizes the truncated mean survival time \citep{tian2014predicting,uno2014moving}, where the truncation is specified by a given quantile such as median; the latter one targets an optimal treatment rule maximizing buffered survival probabilities, where the predetermined threshold is adjusted to account for the truncated mean survival time. 
For the first criterion, we circumvent the need to specify an ad-hoc cutoff but instead to specify a quantile level which is a more interpretable measure. Because the survival times are only partially observed, the conditional value-at-risk (CVaR) framework \citep{rockafellar2000optimization} and its application in individualized decision making \citep{qi2019siam,Qi2019EstimatingID} cannot be directly adopted. We propose an inverse-probability-weighting estimator by using the conditional survival function of the censoring distribution. 
For the second criterion, 
we focus on optimizing the probability of survival time exceeding a quality-adjusted survival time.
We then formally establish its connection with buffered probability-of-exceedance (bPOE) \citep{mafusalov2018buffered} and effectively utilize the proposed efficient optimization algorithm to estimate a robust treatment rule. We refer to the two criteria as CVaR criterion and buffered criterion, respectively.  Recently, \cite{wang2017quantile,leqi2021median} studied different frameworks for estimating various quantile-optimal treatment rules. Our targeted criterion here is different from theirs as \cite{wang2017quantile} are interested in estimating the quantile-optimal treatment rule, i.e., maximizing the quantile of outcomes under a hypothetical intervention, and \cite{leqi2021median} define an optimal treatment rule by the quantile treatment effect.
In contrast, our CVaR and buffered criteria are particularly suitable for survival data analysis as they are designed to robustly maximize the truncated mean survival time as well as the survival probability.

The CVaR risk measure is known for accounting for tail distribution and computational benefits, as shown by its convex optimization formula. Based on CVaR, the buffered probability of failure is first proposed by \cite{rockafellar2010buffered} as a structural reliability measure in the design and optimization of structures.  The mathematical properties of the buffered probability of exceedance (bPOE) and its applications in finance, reliability, and machine learning are subsequently studied in \cite{mafusalov2018buffered, norton2019maximization, rockafellar2020minimizing}.  Both CVaR and bPOE account for the tail behavior of the probability distribution so that our proposed criteria achieve robust maximization of the truncated mean survival time and the survival probability, respectively.
These enable us to reformulate the optimization problem for learning optimal treatment rules corresponding to the CVaR and buffered criteria, with the objective function being the minimization of a convex piecewise affine function over finite values.

 The treatment rule is an indicator function, so solving the corresponding optimization problems to the global optimality is NP-hard. We approximate the indicator function by a difference-of-convex (DC) function with a small accuracy threshold; such approximation is a customary approach that was utilized in \cite{zhou2017residual,Qi2019EstimatingID} for estimating individualized decision rules and \cite{hong2011sequential} for approximately solving chance-constrained stochastic programming. With the DC approximation strategy, the optimization problem for learning the treatment rule is of the DC kind as in \cite{Qi2019EstimatingID} for individualized decision rule, and thus a deterministic difference-of-convex algorithm (DCA) \citep{pang2016computing} is applicable to obtain a directional stationary solution of the corresponding optimization problem as in \cite{Qi2019EstimatingID}. However, the objective function for the treatment rule estimation is the minimizing value functions of the finite sum over all the data, which renders a summation over $O(n^2)$ terms in the DC decomposition and further results in the computational inefficiency of the deterministic DCA when the data size is in the large scale. 

In order to improve the computational efficiency, we propose a new sampling-based DCA which solves a sequence of subproblems. We show that with the appropriate control of the incremental sampling rate, the limit point of the sequence of solutions obtained from the sampling-based DCA is almost surely a directional stationary point of the sharpest kind as in the deterministic DCA.  It is worth noting that the feature of an inner minimizing  function in the optimization problem distinguishes our sampling-based algorithm from the stochastic DC algorithm in \cite{an2019stochastic} for stochastic difference-of-convex programs and the sampling-based algorithm in \cite{liu2022risk} for stochastic difference-of-convex value-function optimization. As a result, the stochastic convexified subproblem constructed at each iteration is a biased approximation of the original objective function and this requires the special strategy utilizing the $\epsilon$-active index set in the update rule of the algorithm in order to obtain the sharpest-kind of stationary solutions; this will become more clear in Section~\ref{sec:algo} in the presentation of the optimization problem and the algorithm in detail.

To illustrate the promise of the proposed treatment rules, we revisit the aforementioned medical application and consider the following simple simulation experiment with mean optimal criterion, CVaR criterion, and buffered criterion. The value functions are denoted by $V(d)$, $V^1(d)$, and $V^2(d)$, respectively. For the male group: If treated $A=1$, $T$ follows accelerated failure time model with a random error $N(1,1)$. If untreated $A=-1$, $T$ follows accelerated failure time model with a random error $N(0.95,0.5)$; For the female group: If treated $A=1$, $T$ follows accelerated failure time model with a random error $N(0.95,0.5)$. If untreated $A=-1$, $T$ follows accelerated failure time model with a random error $N(1,1)$. We apply different criteria on a training dataset and plot empirical value functions of each estimated treatment rule $\hat d$ in Figure~\ref{fig:1}. We choose $\gamma = 0.5$ and  $\tau = 0.4$ for the CVaR criterion and the buffered criterion, respectively, where $\gamma$ and $\tau$ will be introduced in Section~\ref{sec:method}. As can be seen, while, as expected, the mean optimal criterion has the highest $V(d)$, the robust criteria have much higher values in terms of $V^1(d)$ and $V^2(d)$. Note that if we aim to maximize $V(d)$, the optimal rule assigns $A=1$ to male and $A=-1$ to female; If we would like to maximize $V^1(d)$ with $\gamma = 0.5$ or maximize $V^2(d)$ with $\tau = 0.4$, the optimal rule assigns $A=-1$ to male and $A=1$ to female. 

\begin{figure}[h]
\centering
\includegraphics[width=1\linewidth]{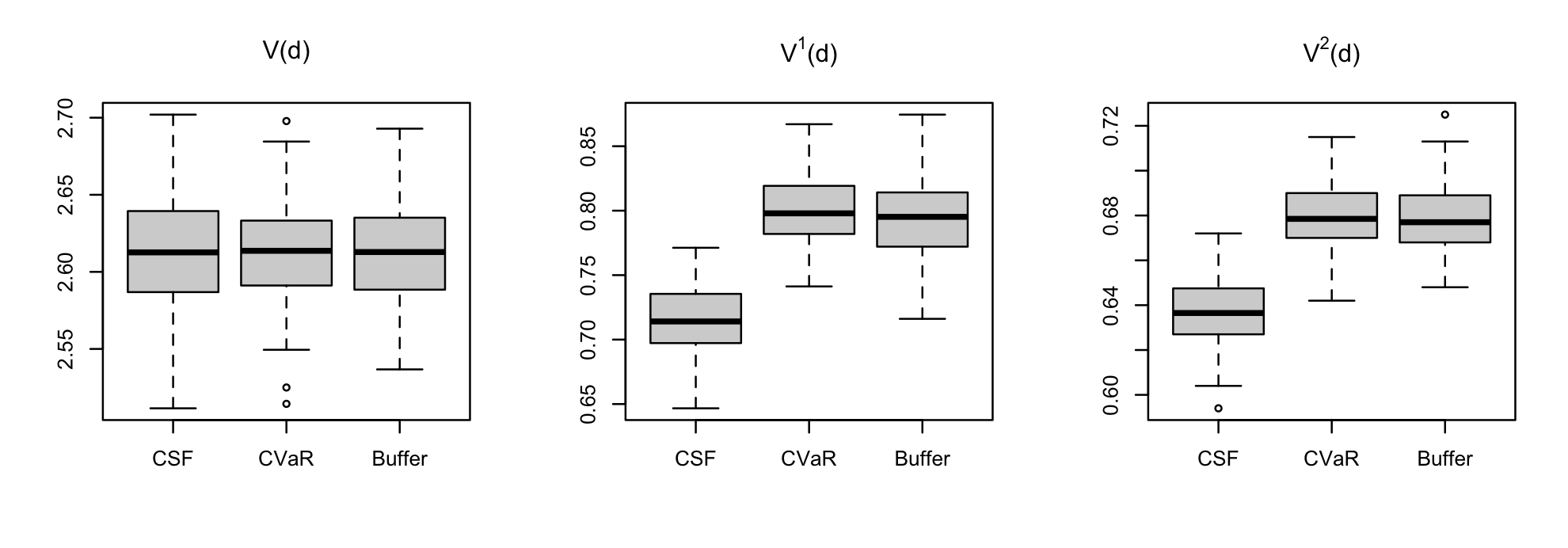}
\caption{An illustrative example of three criteria} \label{fig:1}
\end{figure}

The major contributions of this paper are threefold. First, we propose two robust criteria for learning optimal treatment rules.  The first CVaR criterion aims to maximize truncated mean survival time given a pre-specified quantile, where the quantile can be viewed as a quality-adjusted survival time determined by survival function, while the second buffered criterion maximizes survival function at a quality-adjusted survival time, where the quality-adjusted survival time is in turn determined by truncated mean survival time. In particular, we establish a formal link between optimizing truncated mean survival time and optimizing survival probabilities over a class of treatment rules in Lemma~\ref{lemma:maxmin} by effectively using the intrinsic connection between CVaR and bPOE. Second, we develop inverse probability weighted estimators for both CVaR and buffered criteria with censored survival data. Under certain regularity conditions, the estimated rules are shown to have nearly optimal performance in terms of value functions. Third, leveraging the optimization formulas of CVaR and bPOE and the DC approximation of the indicator function involved, we formulate robust treatment rule learning as a DC program. We then develop a sampling-based DCA that applies under both criteria and establish subsequential convergence to a directional stationary solution.

The remainder of the article is organized as follows. In Section~\ref{sec:method}, we lay out the setup for individualized treatment rules and introduce our robust criteria under right censoring.
In Section~\ref{sec:algo}, we develop a novel optimization algorithm to learn robust treatment rules. 
In Section~\ref{sec:theory}, we propose the theoretical justification of the proposed estimators. Extensive simulation studies are
presented in Section~\ref{sec:simu}.  Numerical studies demonstrate that the proposed method outperforms existing alternatives under the proposed criterion in a variety of settings. We also illustrate our method using AIDS clinical trials in Section~\ref{sec:real}. The article concludes with a discussion of future work in Section~\ref{sec:discussion}. Some needed technical results and additional discussions are provided in the supplementary material.

\section{Methodology}\label{sec:method}

\st{Suppose $\{(X_i,A_i,T_i)\}_{i=1}^n$ are independent and identically distributed (i.i.d.) draws from an idealized uncensored data generating process,
where each $X_i\in\cX$ is a $p$-dimensional baseline covariate vector, $A_i\in\{1,-1\}$ is the treatment label, and $T_i$ denotes the outcome. Without loss of generality, we refer to $T_i$ as the survival time throughout for illustration.
Throughout, we assume standard consistency, positivity, and unconfoundedness assumptions in the causal inference literature.
To motivate our robust decision criteria,
Sections~\ref{subsec:0}--\ref{subsec:2} work in this setting to define the decision targets before introducing censoring in Section~\ref{subsec:rc}.}

\subsection{Setup and original value function framework}\label{subsec:0}

Let $d:\mathcal X \to \{-1,1\}$ denote a treatment rule that assigns a treatment based on covariates $X \in \mathcal X$, and let $T(d)$ be the corresponding potential survival time under a hypothetical intervention that enforces rule $d$, which can be re-written as
\[
T(d) \equiv T(d(X))
=
T(1) I\{d(X)=1\} + T(-1) I\{d(X)=-1\},
\]
where $T(a)$ is the potential survival time under an intervention that sets $A=a$.

In a randomized trial without censoring, the mean-optimal treatment rule is defined as the rule that maximizes the mean potential survival time $V(d) \equiv \mathbb{E}[T(d)]$.
It can be characterized pointwise as
\[
\bar d(X)
=
\operatorname{sign}\big\{\mathbb{E}[T \mid X,A=1] - \mathbb{E}[T \mid X,A=-1]\big\}.
\]
As shown by \citet{qian2011performance}, learning the mean-optimal individualized treatment rule is equivalently formulated as
\begin{equation}
\bar d
=
\arg\max_{d} V(d)
=
\arg\max_{d} \mathbb{E}[T(d)]
=
\arg\max_{d} \mathbb{E}\!\left[\frac{I\{d(X)=A\} T}{\pi(A \mid X)}\right],
\label{eq:opt1}
\end{equation}
where $\pi(A | X)$ is the propensity score \citep{10.2307/2335942}, which is known in randomized experiments.
The last equality shows that, under the idealized data generating process, $V(d)$ can be expressed solely in terms of $\{(X_i,A_i,T_i)\}_{i=1}^n$.

Rather than maximizing the value function in \eqref{eq:opt1} directly, \citet{zhao2012estimating,zhang2012estimating} recast the problem as the following equivalent weighted classification problem:
\begin{equation}
\bar d
=
\arg\min_{d} \mathbb{E}\!\left[\frac{T}{\pi(A \mid X)} I\{d(X)\neq A\}\right],
\label{eq:opt3}
\end{equation}
that is, a $0$–$1$ loss with weight $T/\pi(A|X)$.
A derivation of \eqref{eq:opt1} and \eqref{eq:opt3} is provided in the supplementary material for completeness.
To alleviate the computational burden of the $0$–$1$ loss in \eqref{eq:opt3}, \citet{zhao2012estimating} replace it with the hinge loss and solve the resulting problem via support vector machines.
The ensuing classification approach has appealing robustness properties, particularly in randomized studies where no outcome model is required.
%In the presence of right censoring, related ideas can be implemented using censoring-adjusted plug-in estimators for the weights or outcomes \citep{zhao2015doubly,cui2017tree}; we return to censoring in Section~\ref{subsec:rc}.

\subsection{The CVaR criterion: maximizing truncated mean survival time at a given quantile}\label{subsec:1}

\st{In survival analysis, it is often of interest to improve outcomes among patients who are at high risk of early failure rather than to optimize the overall mean survival time. To that end, we consider a lower-tail truncated mean of the potential survival time under a treatment rule. For a threshold $t_{0}>0$, a natural objective is
\[
\Eb\bigl[T(d)\,I\{T(d)\le t_{0}\}\bigr],
\]
which focuses attention on individuals with relatively short potential survival times.}

The choice of $t_{0}$ is, however, somewhat ad hoc.  Instead of fixing a cutoff point, we index the truncation by a quantile of the potential survival time distribution. For $\gamma\in(0,1)$, define
\[
Q_\gamma\{T(d)\}
= \inf\bigl\{\alpha\in\mathbb R:\Pr(T(d)\le \alpha) > \gamma\bigr\},
\]
the right $\gamma$-quantile of $T(d)$. Equivalently, $Q_\gamma\{T(d)\}$ is the time by which at least a proportion $\gamma$ of individuals would fail under rule $d$, so that the corresponding survival probability is at most $1-\gamma$. In many settings, specifying such a fraction is more straightforward than choosing a fixed time horizon $t_{0}$, which requires more detailed knowledge of the absolute survival scale.

\st{We then define our quantile-indexed truncated mean survival time as
\begin{equation}\label{eq:VVV}
V_\gamma^1(d)
\;=\;
\Eb\bigl[T(d)\,I\{T(d)\le Q_\gamma\{T(d)\}\}\bigr],
\qquad \gamma\in(0,1).
\end{equation}
When the distribution of $T(d)$ is continuous at $Q_\gamma\{T(d)\}$, we have $\Pr(T(d)\le Q_\gamma\{T(d)\})=\gamma$ and therefore
\[
V_\gamma^1(d)
=
\gamma\,\Eb\bigl[T(d)\mid T(d)\le Q_\gamma\{T(d)\}\bigr],
\]
which takes the form of a scaled one-sided trimmed mean, providing an analogue of classical robust statistical measures \citep{bickel1975descriptive, clarke2000note, lugosi2021robust}.}
So maximizing $V_\gamma^1(d)$ is equivalent to maximizing the average survival time among the worst-off $\gamma$ proportion of patients. \st{Here $\gamma$ is a user-specified tail emphasis level chosen in advance, not a data-driven tuning parameter. Smaller values of $\gamma$ focus more strongly on the most vulnerable individuals, whereas larger values of $\gamma$ move the target closer to a full-mean criterion.}

The criterion $V_\gamma^1(d)$ admits a convenient representation in terms of conditional value-at-risk (CVaR)  \citep{rockafellar2000optimization}. \st{For a real-valued random variable $Z$ with cumulative distribution function (CDF) $F$ and level $\eta\in(0,1)$, the value-at-risk is
\[
\mathrm{VaR}_\eta(Z)
=
\inf\{\alpha\in\mathbb R:F(\alpha) \ge \eta\},
\]}
and the corresponding CVaR, or expected shortfall, is given by
\[
\mathrm{CVaR}_\eta(Z) = \frac{1}{1-\eta} \int_{z\geq \textrm{VaR}_{\eta}(Z)} z \, \textrm{d} F(z) 
=
\inf_{\alpha\in\mathbb R}
\left\{
\alpha+\frac{1}{1-\eta}\,\Eb[(Z-\alpha)_+]
\right\},
\]
where $(u)_+=\max(u,0)$. \st{In risk management, $\textrm{VaR}_{\eta}(Z)$ often serves as a loss cutoff, and $\mathrm{CVaR}_\eta(Z)$ then represent the expected loss given that you breached $\textrm{VaR}_{\eta}(Z)$.} If we take $Z=-T(d)$ and $\eta=1-\gamma$, and use the fact that
$Q_\gamma\{T(d)\}=\mathrm{VaR}_\gamma\{T(d)\}$ if the CDF of $T(d)$ is continuous at $\mathrm{VaR}_\gamma(T(d))$,
\begin{equation}\label{eq:VV}
V_\gamma^1(d)
=
\Eb\bigl[T(d)\,I\{T(d)\le Q_\gamma\{T(d)\}\}\bigr]
=
-\gamma\,\mathrm{CVaR}_{1-\gamma}(-T(d))
=
\sup_{\alpha\in\mathbb R}
\left\{
\gamma\alpha-\Eb\bigl[(\alpha-T(d))_+\bigr]
\right\}.
\end{equation}
Thus $V_\gamma^1(d)$ coincides with $\mathrm{CVaR}_{1-\gamma}(-T(d))$ up to the constant factor $-\gamma$. Based on this, it is convenient to express \eqref{eq:VV} under the idealized data generating process, that is,
\begin{equation*}
V_\gamma^1(d)
=
\sup_{\alpha\in\mathbb R}
\left\{
\gamma\alpha
-
\Eb\left[
\frac{I\{A=d(X)\}}{\pi(A\mid X)}
(\alpha-T)\,I\{T\le\alpha\}
\right]
\right\}.
\end{equation*}

\st{With such a connection with CVaR, $V_\gamma^1(d)$ indeed defines a treatment-rule–specific tail risk function: it penalizes rules that produce very short survival times for a subset of patients, in the same spirit that CVaR is widely used in finance and reliability to control extreme losses.} Therefore, we refer to $V_\gamma^1(d)$ as the CVaR criterion that aims to maximize the truncated mean survival time given a pre-specified survival probability. Next,
we propose another criterion that maximizes the survival probability at a quality-adjusted
survival time, where the quality-adjusted survival time is, in turn, determined by the truncated mean
survival time. In practice, we recommend users choose a criterion based on their goal.

\begin{remark}[Relation to restricted mean survival time]\label{rem:rmst_relation}
\st{The criterion $V_\gamma^1(d)$ targets a quantile-indexed truncated mean and can be extended to quantile-indexed restricted mean survival time (RMST). 
For maximizing a fixed-horizon RMST, 
\cite{diaz2018targeted} develop a targeted maximum likelihood estimation framework and \cite{zhao2025efficient} propose a robust transfer learning approach to learn optimal treatment rules, respectively.}

\st{
The RMST with respect to $T(d)$ at horizon $t_{0}>0$ is
\[
\mathrm{RMST}_{t_{0}}(d)
=
\Eb[\min\{T(d),t_{0}\}]
=
\Eb\bigl[T(d)\,I\{T(d)\le t_{0}\}\bigr]
+
t_{0}\Pr(T(d)>t_{0}).
\]
If we choose $t_{0}=Q_\gamma\{T(d)\}$, then
\[
\mathrm{RMST}_{Q_\gamma\{T(d)\}}(d)
=
V_\gamma^1(d)
+
(1-\gamma)\,Q_\gamma\{T(d)\}.
\]
Thus a quantile-indexed RMST decomposes into our truncated mean criterion plus a scaled quantile term. Under mild regularity conditions, $\mathrm{RMST}_{Q_\gamma\{T(d)\}}(d)$ admits an alternative representation 
\[(1-\gamma)\alpha^*(d) + \mathbb E[T(d)I\{T(d)\leq \alpha^*(d)\}], \]
where $\alpha^*(d)$ is the maximizer in \eqref{eq:VV}, that is,
$$\alpha^*(d) = \arg \max_{\alpha \in \mathbb R} \{\alpha\gamma - \mathbb E[(\alpha-T(d))I\{T(d)\leq \alpha\}]\}.$$
Therefore, our proposed method can be readily applied for optimal rule estimation under the criterion $\mathrm{RMST}_{Q_\gamma\{T(d)\}}(d)$; see the supplementary material for more details.}
\end{remark}

\subsection{The buffered criterion: maximizing buffered survival probabilities}\label{subsec:2}

\st{
For many clinical applications, survival probabilities are often more directly interpretable, especially when decisions hinge on meeting a clinically meaningful milestone. A widely used estimand is the probability of surviving beyond a time point $t_0$, i.e., $\Pr(T(d)>t_0)$ \citep{gelber1989quality, jiang2017estimation}. A central practical difficulty, however, is that the choice of $t_0$ is often ambiguous: there may be multiple clinically reasonable cutoffs, yet even modest changes in $t_0$ can lead to materially different optimization targets and hence different optimal rules.}

\st{Note that maximizing $\Pr(T(d)>t_0)$ is equivalent to minimizing the fraction of individuals whose survival time falls below the cutoff, so the choice of cutoff implicitly defines what constitutes the “high-risk” group. Rather than fixing $t_0$ a priori, we anchor the cutoff through a
user-specified lower-tail mean level $\tau$. Specifically, for each rule $d$, we let the
cutoff be determined by the smallest time point $q_\tau(d)$ such that the average survival time
among individuals with $T(d)\le q_\tau(d)$ equals $\tau$. In this way, $q_\tau(d)$ serves as a buffered cutoff and conservative boundary for patients who are at high risk, and it can be less sensitive compared to thresholding at $t_0$ in settings with a pronounced lower tail.}

To achieve this goal, we define the buffered survival probability as below,
\begin{equation}
    \label{eq:VVVV}
    V^2_{\tau}(d)\equiv \begin{cases} 1- \gamma, \,\,  \mbox{ with } \gamma \mbox{ satisfying } \frac{V_{\gamma}^1(d)}{\gamma} %= -\mbox{CVaR}_{1-\gamma}(-T(d)) %= \frac{1}{\gamma} \, \Eb^d\,[\,TI\{T \leq Q_\gamma\{T(d)\}\}\,]  
    = \tau & \mbox{ for }  \tau \in (\inf(T(d)), \mathbb{E}[T(d)] \, ]\\
    0 &  \mbox{ for } \tau > \mathbb{E}[T(d)] \\
    1 &  \mbox{ for } \tau \leq  \inf(T(d)).
    \end{cases}
\end{equation}
Such definition is proper because $\frac{V_{\gamma}^1(d)}{\gamma} = -\mbox{CVaR}_{1-\gamma}(-T(d)) $, which is a continuous and strictly increasing function with respect to $\gamma$ for  $\gamma \in (0, 1)$ with the range $(\inf(T(d)), \mathbb{E}[T(d)] \,)$.

\begin{remark}
When the survival time $T(d)$ is a continuous random variable, the value function is reduced to $V^2_\tau(d) = \Pr(T > q_\tau(d))$  with $q_\tau(d)$ satisfying $\Eb[T(d)\,|\,T(d)\leq q_\tau(d)]=\tau$.
Therefore, Equation~\eqref{eq:VVVV} enjoys the interpretation of the survival function at the given time point $q_\tau(d)$. So maximizing $V^2_\tau(d)$ is equivalent to maximizing the surviving proportion beyond a truncation while keeping the mean survival among those below that truncation at the specified level. \st{Here $\tau$ is a user-specified level that encodes the desired average performance among high-risk patients.}
\end{remark}

We refer to $V^2_\tau(d)$ as the buffered criterion because it is closely related to the buffered probability of exceedance (bPOE) introduced by \citet{rockafellar2010buffered}.  From a probabilistic perspective, the survival function
is known as the probability of exceedance (POE) as a measure of reliability and seems appealing
due to its straightforward interpretation. However, the POE disregards the scale of the outcome
at tail probability, and thus, a criterion that directly maximizes the survival function may lead to
treatment rules that take arbitrarily bad outcomes with a substantial probability. Moreover, the
optimization of POE may encounter significant computational difficulties in practice as the POE
of a random function is not continuously differentiable or even discontinuous in many situations,
particularly when there are only finite scenarios. The theoretical and computational challenges
of POE are discussed in detail in \citet{rockafellar2010buffered}. To overcome those difficulties, bPOE is proposed in \citet{rockafellar2010buffered} as an alternative measure of reliability, which provides an upper bound of POE with significant computational advantages.

In the following lemma, we formally link the proposed framework to the concept of bPOE
\citep{mafusalov2018buffered}, so that we can effectively leverage Proposition 2.2 of \citet{mafusalov2018buffered} to solve our optimization problem in mind.
\begin{lemma}\label{lemma:maxmin}
Let $V^2_\tau(d)$ be defined as in \eqref{eq:VVVV} with $\tau\in(\inf T(d),\Eb[T(d)]]$. Then
\[
V^2_{\tau}(d)
=
1 - \inf_{c\ge 0} \Eb\!\left[ \max\bigl\{0,\,c(-T(d)+\tau)+1\bigr\} \right].
\]
Consequently, maximizing $V^2_\tau(d)$ over $d$ is equivalent to minimizing the following over $d$,
\[
M^2_\tau(d)
=
\inf_{c\ge 0} \Eb\!\left[ \max\bigl\{0,\,c(-T(d)+\tau)+1\bigr\} \right].
\]
\end{lemma}

Based on the lemma, it is also convenient to express the value function according to the idealized data generating process, i.e.,
\begin{equation*}
M_\tau^2(d)
=
\inf_{c\ge 0} \Eb\!\left[\frac{I\{A=d(X)\}}{\pi(A\mid X)}\max\bigl\{0,\,c(-T+\tau)+1\bigr\} \right].
\end{equation*}

\subsection{Accounting for censoring: identification and learning}\label{subsec:rc}

\st{In practice, survival time is typically subject to right censoring, including administrative censoring.
In such settings, the full mean survival time $\mathbb{E}[T]$ is generally not identifiable without additional structure.
We therefore develop identification and estimation of our proposed criteria within a standard right-censoring framework. 
Specifically, we observe \[
Z = (X,A,Y,\Delta), \qquad Y = T \wedge C,\quad \Delta = I(T \le C),
\]
where $C$ denotes the censoring time, $Y$ is the observed follow-up time, and $\Delta$ is the event indicator.}

\st{Throughout this subsection and thereafter, we impose the following additional assumptions \citep{fleming2011counting,cui2020estimating}:}

\st{\textbf{(A2.1)} \textit{Bounded horizon.} The survival time $T$ admits a finite maximal horizon $0<h<\infty$.}

\st{\textbf{(A2.2)} \textit{Conditional independent censoring.} $T\perp C\mid (X,A)$.}

\st{\textbf{(A2.3)} \textit{Positivity.} There exists $0<\eta_C\le 1$ such that $\Pr(C\ge h\mid X,A)\ge \eta_C$.}

\st{\noindent Under \textbf{(A2.1)}–\textbf{(A2.3)}, the robust criteria $V_\gamma^1(d)$ and $M_\tau^2(d)$ can be identified from the right-censored data $\{(X_i,A_i,Y_i,\Delta_i)\}_{i=1}^n$.}

\st{To streamline notation, we write $V_\gamma^1(d)$ and $M_\tau^2(d)$ as $V^1(d)$ and $M^2(d)$, respectively, with the dependence on $\gamma\in(0,1)$ and $\tau\in(\inf T(d), \Eb[T(d)]]$ understood}. Because $T$ is only partially observed, we first express $V^1(d)$ and $M^2(d)$ in terms of the observed variables $Z$ and the conditional survival function of the censoring distribution, that is, 
\(
S_C(t\mid X,A) = \Pr(C \ge t \mid X,A).
\)

\begin{lemma}\label{lemma:1}
Under assumptions \textbf{(A2.1)} to \textbf{(A2.3)}, we have
\begin{align*}
V^1(d)
&=
\sup_{\alpha\in\mathbb R}
\left\{
\alpha\gamma
-
\Eb\left[
\frac{\Delta\,(\alpha-Y)I(\alpha\ge Y)}{S_C(Y\mid X,A)}
\frac{I\{A = d(X)\}}{\pi(A\mid X)}
\right]
\right\},\\[0.1in]
M^2(d)
&=
\inf_{c\ge 0}
\Eb\left[
\frac{\Delta\,\max\{0,\,c(-Y+\tau)+1\}}{S_C(Y\mid X,A)}
\frac{I\{A = d(X)\}}{\pi(A\mid X)}
\right].
\end{align*}
\end{lemma}

Lemma~\ref{lemma:1}  provides the explicit formulation of the two robust criteria with censored data and forms the basis for our learning procedure. To turn these population functions into estimators using the observed data $\{(X_i,A_i,Y_i,\Delta_i)\}_{i=1}^n$ by replacing expectations with empirical averages, two main challenges arise.

First, the censoring survival function $S_C(Y| X,A)$ is unknown in practice and must be estimated. A natural approach is to fit a working model for the censoring distribution, for example via a Cox proportional hazards model \citep{cox1972regression} or random survival forests \citep{ishwaran2008}, and to plug the resulting estimator $\widehat S_C(Y|X,A)$ into the expressions in Lemma~\ref{lemma:1}.

Second, the indicators $I\{A=d(X)\}$ appearing in $V^1(d)$ and $M^2(d)$ lead to non-smooth, combinatorial optimization over treatment rules. To obtain implementable learning objectives, we write $d(X)=\operatorname{sign}\{f(X)\}$ for some function $f\in\mathcal H$ and approximate $I\{A=\operatorname{sign}(f(X))\}=I\{A f(X)>0\}$ by a smooth surrogate loss $\mathcal L\{A f(X)\}$. Following \citet{Qi2019EstimatingID}, we consider surrogate losses of the form $\mathcal L(t) = \mathcal L_1(t) - \mathcal L_2(t)$, where $\mathcal L_1$ and $\mathcal L_2$ are differentiable convex functions. For concreteness, one convenient choice is
\[
\cL(u) =
\begin{cases}
0, & u < -\delta, \\
\frac{1}{2} (1+u/\delta)^2, & -\delta \le u < 0, \\
1 - \frac{1}{2}(1-u/\delta)^2, & 0 \le u < \delta,\\
1, & u \ge \delta,
\end{cases}
\]
with
\[
\cL_1(u) =
\begin{cases}
0, & u \le -\delta, \\
\frac{1}{2}(1+u/\delta)^2, & -\delta < u \le 0, \\
\frac{1}{2} + u/\delta, & u > 0,
\end{cases}
\qquad
\cL_2(u) =
\begin{cases}
0, & u \le 0, \\
\frac{1}{2} (u/\delta)^2, & 0 < u \le \delta, \\
u/\delta - \frac{1}{2}, & u > \delta.
\end{cases}
\]

Combining Lemma~\ref{lemma:1}, the estimator $\wh S_C$, and the smooth surrogate loss $\mathcal L$, we arrive at the following learning objectives:
\begin{align}
\label{eq:estimated_treatment_regime_CVaR}
\max_{f \in \mathcal H} \quad V^{\wedge}_\cL(f)
&=
\sup_{\alpha\in\mathbb R}
\Eb\left[
\left\{
\alpha\gamma
-
\frac{\Delta\,(\alpha-Y)I(\alpha\ge Y)}{\wh S_C(Y\mid X,A)}
\right\}
\frac{\cL\{A f(X)\}}{\pi(A\mid X)}
\right],\\[0.1in]
\label{eq:estimated_treatment_regime_buffered}
\min_{f \in \mathcal H}  \quad  M^{\wedge}_\cL(f)
&=
\inf_{c \ge 0}
\Eb\left[
\frac{\Delta\,\max\{0,\,c(-Y+\tau)+1\}}{\wh S_C(Y\mid X,A)}
\frac{\mathcal L\{A f(X)\}}{\pi(A\mid X)}
\right].
\end{align}
In practice, with observed data $\{(X_i,A_i,Y_i,\Delta_i)\}_{i=1}^n$, we can then replace the expectations in \eqref{eq:estimated_treatment_regime_CVaR}–\eqref{eq:estimated_treatment_regime_buffered} by empirical averages and optimize them using DC programming as in \citet{Qi2019EstimatingID}. The resulting function $\hat f$ induces the estimated treatment rule $\hat d(x)=\operatorname{sign}\{\hat f(x)\}$. We develop the corresponding algorithmic implementation in the next section.

\section{Sampling-based learning algorithm}
\label{sec:algo}
In this section, we focus on the algorithmic development for learning the treatment rules under the two robust criteria with censored data $Z_i \equiv (X_i, A_i, Y_i, \Delta_i)$ for ${i=1, \ldots, n}$. %In practice, the  conditional survival function of the censoring distribution $S_C(Y|X,A)$ is generally unknown in practice, 
% With estimations $\{\widehat S_C(Y_i|X_i,A_i)\}$ of   the conditional survival function  values $\{S_C(Y_i|X_i,A_i)\}$ based on the data,  by writing $d(X) = \sign (f(X))$ where $f(\bullet)$ is chosen from a hypothesis class $\mathcal{H}$,  
We make two assumptions below.

\textbf{(A3.1)}  The surrogate function $\mathcal L(\bullet)$ satisfies that $\mathcal L(t) \in [0,1]$ for every $t \in \mathbb R$, and $ \mathcal L(\bullet) = \mathcal L_1(\bullet) - \mathcal L_2(\bullet)$  where $ \mathcal L_1(\bullet)$ and $
\mathcal L_2(\bullet)$ are smooth convex functions with the Lipschitz gradient modulus $L$; 

\textbf{(A3.2)}   $\mathcal H$ is a Reproducing Kernel Hilbert Space (RKHS)    with the norm $\| \bullet\|_{\mathcal H}$ and the corresponding real-valued kernel  $   k:{\mathcal {X}}\times {\mathcal {X}}\to \mathbb {R} $ that is symmetric and positive-definite.  

%The boundedness and structure of $\mathcal L$ ensure that the empirical objectives are well-defined and admit a difference-of-convex decomposition; a concrete example satisfying this condition is given in Section~\ref{sec:method}. Moreover, imposing an RKHS structure on $f$ is standard in kernelized policy learning.

For any $f\in\mathcal H$,
$f(x) = \sum_{j=1}^n \beta_j k(x,X_j)$ for some coefficient vector
$\beta = (\beta_1,\ldots,\beta_n)^\top$, and if
$K_i \equiv (k(X_i,X_1),\ldots,k(X_i,X_n))^\top$, we have $f(X_i) = K_i^\top \beta$.
For treatment learning tasks under the two robust criteria, we can then formulate the corresponding empirical problems with regularization as follows:
\begin{align}
\underset{f \in \cH}{\mbox{max}} \quad \left( \sup_{\alpha \in \mathbb{R}} \,\,  \alpha \gamma -  \frac{1}{n}\sum_{i = 1}^{n} \frac{\mathcal L(A_i    \, K_i^\top \beta)}{\pi(A_i | \Xb_i)}   \frac{\Delta_i (\alpha-Y_i)I(Y_i\leq \alpha)}{ \widehat S_C(Y_i|X_i,A_i)} - \lambda \|f\|_{\mathcal H}^2\right) ,\label{eq:ev1}\\[0.1in]
\underset{f \in \cH}{\mbox{min}} \quad \left( \inf_{c \geq 0} \,\, \frac{1}{n}\sum_{i = 1}^{n} \frac{ \mathcal L(A_i    \, K_i^\top \beta)}{\pi(A_i | \Xb_i)} \left\{\frac{\Delta_i \max(0, c(-Y_i+\tau)+1)}{\widehat S_C(Y_i|X_i,A_i)}\right\} + \lambda \|f\|_{\mathcal H}^2 \right) .\label{eq:ev2}
\end{align}
Hereinafter, Problems~\eqref{eq:ev1} and \eqref{eq:ev2} are the main optimization problems for the CVaR and the buffered criteria respectively, and the rest of this section is about how to solve them efficiently in a scalable way. We focus on solving Problem~\eqref{eq:ev1} and we will show later that the proposed algorithm can be applied to solve Problem~\eqref{eq:ev2} as well. 

\subsection{Empirical DC formulation}
\label{sec:edc}
By the Representer theorem, with $W_i \equiv  \frac{\Delta_i}{\pi(A_i \mid X_i)\, \wh S_C(Y_i | X_i, A_i)}$, $K_{ij} \equiv  k(X_i, X_j)$,  Problem \eqref{eq:ev1} can be reformulated as follows,
\begin{equation}
\label{eq:V1_opt_surrogate} 
\underset{\beta \in \mathbb{R}^n}{\mbox{min}}  \quad  \left(  V(\beta) \equiv \underset{\alpha \in \mathbb R}{\inf} \, -\frac{1}{n} \sum_{i=1}^n \,  {\psi}(\beta, \alpha; Z_i) +  \lambda \, \beta^\top K \beta\right) 
\end{equation}  
where $\psi(\beta, \alpha; Z_i) \equiv \alpha \, \gamma -  \left(\,\mathcal L_1\big(A_i \, K_i^\top \beta \big) -\mathcal L_2\big(A_i \,  K_i^\top \beta \big)\,\right) \, W_i\,  (\alpha - Y_i)_+$. To simplify notation, we henceforth omit the superscripts and subscripts on $V$.
According to \cite{rockafellar2000optimization}, the optimal solution of $\alpha$ in \eqref{eq:V1_opt_surrogate} should be contained in a bounded set. Moreover, since $ {\psi}(f, \bullet; Z_i)$ is a concave piecewise affine function with knots $\{Y_i\}_{i=1}^n$, it is shown in \cite{Qi2019EstimatingID} that the inner minimization over $\alpha$ in \eqref{eq:V1_opt_surrogate} achieves its optimum at one of the knots $\{Y_i\}_{i=1}^n$. Thus, the objective function of  \eqref{eq:V1_opt_surrogate} can be rewritten as 
\begin{equation*}
\begin{array}{lll}
V(\beta)  =  \underset{ j \in [n] }{\mbox{min}} \, -\displaystyle{\frac{1}{n} \sum_{i=1}^n } \,  {\psi}(\beta, Y_j; Z_i) +  \lambda \, \beta^\top K \beta,
\end{array}
\end{equation*}
and with
\(
\overline Y_i \equiv \sum_{k=1}^n (Y_k - Y_i)_+,
\widehat Y_{i,j} \equiv \overline Y_i - (Y_j - Y_i)_+
= \sum_{k\ne j} (Y_k - Y_i)_+
\), it can further be decomposed into the difference of two convex functions as follows:
\begin{equation}
\label{eq:dc_decomposition}
\begin{array}{lll}
 V(\beta) & =  & \underset{ j \in [n] }{\mbox{min}} \quad - \displaystyle{\frac{1}{n}} \sum_{i=1}^n  \Big( Y_j \, \gamma -  \big(\,\mathcal L_1(A_i  \, K_i^\top \beta) -\mathcal L_2(A_i  \, K_i^\top \beta)\,\big) (Y_j - Y_i)_+ W_i \Big) +  \lambda \,  \beta^\top K \beta  \\[0.1in]
&= &  \underbrace{\displaystyle{\frac{1}{n}}     \sum_{i=1}^n \,     \mathcal L_1(A_i  \, K_i^\top \beta)\,  \overline Y_i W_i }_{\mbox{denoted by } \varphi_{1}(\beta)} +  \lambda \, \beta^\top K \beta  \\[0.5in] 
&& -  \, \underset{ j \in [n] }{\mbox{max}} \, \underbrace{  \displaystyle{\frac{1}{n}} \sum_{i=1}^n  \left( \,Y_j \, \gamma +   \displaystyle{{\mathcal L_1(A_i \,  \, K_i^\top \beta)}} \, \wh Y_{i,j} \, W_i  +   \mathcal L_2(A_i \,  \, K_i^\top \beta) (Y_j - Y_i)_+ W_i \right) \,  }_{\mbox{denoted by } \varphi_{2,j}(\beta)}.
\end{array}
\end{equation}
Then the optimization problem \eqref{eq:V1_opt_surrogate} is a DC program, and as implemented in \cite{Qi2019EstimatingID} the deterministic DC algorithms and its enhanced variation proposed by \cite{pang2016computing} can be applied for obtaining a critical or d-stationary solutions. 

\subsection{Sampling-based DCA}
\label{sec:sdc}
However, due to the finite-sum structures of the two convex component functions in \eqref{eq:dc_decomposition},  DC algorithms need to iteratively solve convex subproblems with finite-sum over the whole data set, and thus could be highly computational inefficient especially when the data size $n$ is large. Therefore, we propose a sampling-based DC algorithm for solving Problem \eqref{eq:V1_opt_surrogate}. Specifically, let $I_\nu \subset [n]$ be an index set of size $N_\nu = |I_\nu|$, obtained by sampling $N_\nu$ indices  from $\{1,\ldots,n\}$. Using the corresponding subsample $\{Z_i : i \in I_\nu\}$, we construct the sampling-based approximation:
\[
\begin{array}{lll}
 \wt V^{(\nu)}(\beta)& \equiv  & \underset{ j \in \mathcal I_\nu }{\mbox{min}} \, - \displaystyle{\frac{1}{N_\nu}} \sum_{i \in I_\nu}  \,  {\psi}(\beta, Y_j; Z_i) + \lambda \, \beta^\top K \beta  \\[0.3in]
& = & \underbrace{\displaystyle{\frac{1}{N_\nu}}      \sum_{i \in I_\nu} \,     \mathcal L_1(A_i  \, K_i^\top \beta)\,   \overline Y_i^\nu \, W_i }_{\mbox{denoted by } \tilde{\varphi}_{1}^{(\nu)}(\beta)} + \lambda \, \beta^\top K \beta   \\[0.5in] 
&  &-   \underset{ j \in \mathcal I_\nu }{\mbox{max}} \quad \underbrace{  \displaystyle{\frac{1}{N_\nu}} \sum_{i \in I_\nu}  \left( \,Y_j \, \gamma +    \displaystyle{{\mathcal L_1(A_i \,  \, K_i^\top \beta)}}   \wh Y_{i,j}^{\nu} \, W_i  +   \mathcal L_2(A_i \,  \, K_i^\top \beta) (Y_j - Y_i)_+ W_i \right) \,  }_{\mbox{denoted by } \tilde \varphi^{(\nu)}_{2,j}(\beta)},
\end{array}\]  
where $\overline Y^{(\nu)}_i \equiv \sum_{k  \in I_\nu} \, (Y_k - Y_i)_+$, $\wh Y^{(\nu)}_{i,j} \equiv \sum_{k(\neq j) \in I_{\nu}} \, (Y_k - Y_i)_+$.
Due to the finite min operator, such a sampling-based approximation function $\wt{V}^{(\nu)}(\beta)$  is a biased estimator of $V(\beta)$, which necessitates a particular sampling scheme in the follow-up algorithmic development and convergence analysis. With the following active index sets,
 \[
\begin{array}{ll}
\mathcal{M}(\beta) \equiv   \underset{j \in [n]}{\mbox{argmin}} \, \,\, \  \displaystyle{\frac{1}{n} \sum_{i=1}^{n}}   \psi(\beta, Y_j; Z_i) ,\\[0.2in] %= \underset{j \in I_\nu}{\mbox{argmax}} \, \,\,  \wt \varphi_{2,j}^{(\nu)}(f) ,
\wt{\mathcal{M}}^{(\nu)}_{\varepsilon}(\beta) \equiv \left\{\ell  \in \mathcal I_\nu: \displaystyle{\frac{1}{N_\nu} \sum_{i=1}^{N_\nu}}  \psi(\beta, Y_\ell; Z_i) \leq \min_{ j \in \mathcal I_\nu} \,  \displaystyle{\frac{1}{N_\nu} \sum_{i=1}^{N_\nu}}  \psi(\beta, Y_j; Z_i)  + \varepsilon \right\}, \quad  \mbox{ with } \varepsilon \geq 0, 
\end{array}
\]
we define the linear approximation functions of ${\varphi}_{2,j}(\beta)$ and $\wt{\varphi}_{2,j}^{(\nu)}(\beta)$ respectively at a reference point $\beta^{(\nu)}$:
\begin{equation}
\label{eq:linearization}
\begin{array}{ll}
\widehat{\varphi}_{2,j}(\beta; \beta^{(\nu)}) \equiv & {\varphi}_{2,j}(\beta^{(\nu)}) + \langle \nabla {\varphi}_{2,j}(\beta^{(\nu)}), \, \beta -\beta^{(\nu)}\rangle, \mbox{ for any } j \in \mathcal M(\beta),\\[0.1in] % \displaystyle{\frac{1}{N}} \sum_{i \in [N]} \left(   \sum_{k(\neq j) \in [N]}  \displaystyle{\frac{ S_1(A_i \, f^{(\nu)}(\mathbf{X}_i)) +  \, A_i \, S_1^{\,\prime}(A_i \, f(\mathbf{X}_i))  (f(\mathbf X _i) - f^{(\nu)}(\mathbf{X}_i))}{\pi(A_i \mid \mathbf{X}_i) }}  (T_k - T_i)_+ \right) 
\widehat{\varphi}_{2,j}^{(\nu)}(\beta; \beta^{(\nu)}) \equiv &\wt{\varphi}^{(\nu)}_{2,j}(\beta^{(\nu)}) + \langle \nabla \wt{\varphi}^{(\nu)}_{2,j}(\beta^{(\nu)}),\, \beta -\beta^{(\nu)} \rangle, \mbox{ for any } j \in \wt{\mathcal{M}}^{(\nu)}_{\varepsilon}(\beta).
%\displaystyle{\frac{1}{N_\nu}} \sum_{i \in I_\nu} \left(   \sum_{k(\neq j) \in I_\nu}  \displaystyle{\frac{ S_1(A_i \, f^{(\nu)}(\mathbf{X}_i)) +  \, A_i \, S_1^{\,\prime}(A_i \, f(\mathbf{X}_i))  (f(\mathbf X _i) - f^{(\nu)}(\mathbf{X}_i))}{\pi(A_i \mid \mathbf{X}_i) }}  (T_k - T_i)_+ \right) \\[0.3in]
 %& + \displaystyle{\frac{1}{N_\nu}} \sum_{i \in I_\nu}  \, \left( \, \displaystyle{\frac{S_2(A_i \, f^{(\nu)}(\mathbf{X}_i)) +  \, A_i \, S_2^{\,\prime}(A_i \, f(\mathbf{X}_i))  (f(\mathbf X _i) - f^{(\nu)}(\mathbf{X}_i))}{\pi(A_i \mid \mathbf{X}_i) }}  (T_j - T_i)_+ \right)    - T_j. \, \gamma 
\end{array}
\end{equation}
 Accordingly, we can construct the sampling-based convexified approximation function of $V(\beta)$ as follows,  $
\widehat V^{(\nu)}_{j}(\beta; \beta^{(\nu)}) \equiv \wt \varphi_1^{(\nu)}(\beta) - \widehat{\varphi}_{2,j}^{(\nu)}(\beta; \beta^{(\nu)})+ \lambda \, \beta^\top K \beta $ for every $j \in \wt{\mathcal{M}}^{(\nu)}_{\varepsilon}(\beta)$.

Building upon these, we develop a sampling-based algorithm below which is embedded with three \textit{sequential} steps: sampling, DC-decomposition, and convexification. At iteration $\nu$, with i.i.d. generated sample set $\{Z_i: i \in I_{\nu}\}$ containing newly generated samples and all historically generated samples, for every index $j \in \widetilde M_\nu^{\varepsilon}(\beta)$, we can constitute a sampling-based convexified approximation function $\widehat V^{(\nu)}_{j}(\beta; \beta^{(\nu)})$, and further obtain the corresponding proximal mapping point by implementing the convex-programming solver. Due to the finite-max structure of the second component function in DC decomposition \eqref{eq:dc_decomposition}  together with the biased estimation, we tailor the enhancement technique of deterministic DCA in \cite{pang2016computing} for such a sampling-based algorithm; namely in Steps 4 and 5 of the proposed algorithm, we select the best candidate solution over the $\varepsilon$-active index set as the next iterate point. We will show such an enhancement yields convergence to a directional stationary solution of the original DC problem, a particularly sharp stationarity type for general nonconvex programs.

\noindent\makebox[\linewidth]{\rule{6.5in}{1pt}}
\vspace{-0.3in}

 \noindent {\bf Algorithm 1}\\
\vspace{-0.4in}

\noindent\makebox[\linewidth]{\rule{6.5in}{1pt}}
\vspace{-0.3in}

 \begin{algorithmic}[1]
     \STATE \textbf{Initialization:} Let $\beta^{(0)} \in \mathbb R^n$ and positive scalars
    $\rho, \lambda, \varepsilon$ be given.  Set $N_0 = 0$.
    % $\underline{\rho} < \bar{\rho}$.  Set $N_0 = 0$
    % and $\alpha\in(0,1/2)$,
    \FOR {$\nu=1,2, \cdots$,}
    % choose a scalar $\rho \in ( \underline{\rho}, \bar{\rho} )$ and}		
    \STATE Generate an independent sample set
    $\{ Z_i: i \in K_\nu \}
    $ of size $\Delta_\nu$ i.i.d. from the whole sample set $\{Z_i\}_{i=1}^n$ and set $I_\nu \equiv I_{\nu-1} \cup K_\nu$, and $N_\nu \equiv N_{\nu-1} + \Delta_\nu$;
    \STATE For every $ j \in \wt{\mathcal{M}}^{(\nu)}_{\varepsilon}(\beta^{(\nu)})$, compute 
$ \beta^{(\nu+1/2)}_j = \underset{\beta \, \in \, \mathbb R^n}{\mbox{argmin}} \left\{ \,		
 \widehat V^{(\nu)}_{j}(\beta; \beta^{(\nu)}) %\wt{\varphi}_1^{(\nu)}(\beta) - \wh{\varphi}_{2,j}^{(\nu)}(\beta; \beta^{(\nu)}) +\lambda \beta^\top K \beta  
 + \displaystyle{
        \frac{1}{2 \rho}
    } \, \| \beta - \beta^{(\nu)}\|^2 \, \, \right\};$
    \STATE Let $j_\nu \in \displaystyle{\mbox{argmin}_{ j \in \wt{\mathcal{M}}^{(\nu)}_{\varepsilon}(\beta^{(\nu)})}} \, \left\{  \widehat V^{(\nu)}( \beta^{(\nu+1/2)}_j) + \displaystyle{
        \frac{1}{2 \rho}
    } \, \| \beta^{(\nu+1/2)}_j - \beta^{(\nu)}\|^2 \, \, \right\} $ and set $\beta^{(\nu+1)} = \beta^{(\nu+1/2)}_{j_\nu}$.
    \ENDFOR
\end{algorithmic}
% \end{algorithm}
\noindent\makebox[\linewidth]{\rule{6.5in}{1pt}}

It is worth  noticing that the proposed sampling-based DC algorithm has several major distinctions from some recent works on sampling-based DC algorithms in \cite{an2019stochastic} and \cite{liu2022risk}, which thus necessitates a separate convergence analysis afterwards. First, the stochastic DC algorithm of \cite{an2019stochastic} is developed for a class of DC programs where both two convex component functions are expectation functions, whereas DC programs studied in the present paper and in \cite{liu2022risk} both contain the second component functions that are the value function of finite-sum or expectation functions. Second, we construct the sampling-based convexified approximation function with a sequential sampling and DC-decomposition,  so that we only use one single sample set for both two convex component functions. Since the two component functions are much likely to have positive covariance due to the shared parts, such a common random number technique  leads to  solution sequences with less volatility compared to the algorithm in \cite{liu2022risk} with independent sampling strategy for the two component functions.  The third distinction is that based on the finite-max structure of \eqref{eq:dc_decomposition}, we integrate the enhanced technique into the sampling-based DC algorithm in order to obtain the directional stationary solution, which distinguishes the proposed algorithm and the convergence analysis from the sampling-based algorithm in \cite{liu2022risk} which can only achieve the convergence to critical solutions to the stochastic DC value function optimization problem.

Recall that for the minimization of a B(ouligand)-differentiable function $f$ over a convex set $\mathcal{X}'$, we say that $\bar x$ is a directional stationary solution if $f'(\bar x; x-\bar x) \equiv \lim_{t \to 0} \frac{f(\bar x+ t(x- \bar x)) - f(\bar x)}{t} \geq 0$ for any $x\in \mathcal{X}'$. We present the convergence result of the proposed algorithm with the proof provided in the supplementary material.

	\begin{theorem} \label{thm:convergence_sdc} \rm
Under assumptions \textbf{(A3.1)} and \textbf{(A3.2)}, with $\Delta_\nu \geq  C$ for some positive integer $C$, with arbitrary $\rho \geq  0$ and $\lambda >0$, 
		any limit point $\beta^{\,\infty}$ of the sequence $\{\beta^{(\nu)}\}$ generated by Algorithm 1 if exists is a directional stationary point of \eqref{eq:V1_opt_surrogate} with probability 1.
		\end{theorem}

\subsection{Extension to the buffered criterion}
For learning the treatment rule under the buffered criterion, the optimization problem can be formulated below,   
\begin{align*}
\underset{\beta \in \mathbb R^n}{\mbox{min}} \quad  \inf_{c \geq 0} \, \, \frac{1}{n}\sum_{i = 1}^{n} \left(\mathcal L_1(A_i  \, K_i^\top \beta) - \mathcal L_2(A_i  \, K_i^\top \beta) \right) \max(0, c(-Y_i+\tau)+1) W_i.
\end{align*}
Similar to case of the CVaR criterion, the inner minimization over $c$ also achieves the optimum at one of the knots $\{\frac{1}{Y_i - \tau}\}_{i=1}^n$. Hence, the optimization problem can be written as 
\begin{align}
\label{eq:buffered_opt}
\underset{\beta \in \mathbb R^n}{\mbox{minimize}} \quad  M(\beta) \equiv \min_{j\in [n]} \,\frac{1}{n}\sum_{i = 1}^{n} \left(\mathcal L_1(A_i  \, K_i^\top \beta) - \mathcal L_2(A_i  \, K_i^\top \beta) \right) \max\left\{ 0, \frac{-Y_i+\tau}{Y_j - \tau}+1 \right\} W_i, 
\end{align}
The objective function $M(\beta)$ is the minimization of a finite sum of DC functions, and therefore has the DC decomposition similar to that in \eqref{eq:dc_decomposition}. Hence, to learn the treatment rule under the buffered criterion, we can also leverage Algorithm~1 to solve the optimization problem \eqref{eq:buffered_opt}.

\section{Asymptotic properties}\label{sec:theory}
In this section, we provide the asymptotic results of the proposed estimated treatment rules with respect to \eqref{eq:ev1} and \eqref{eq:ev2}. The purpose of this section is to demonstrate that the value function of the estimated optimal treatment rule converges to the optimal value function under the CVaR and the buffered criteria, respectively. In particular, we establish Fisher consistency, excess risk bound, and universal consistency of the estimated optimal treatment rules. 

We start with introducing some notations for the CVaR criterion. Recall that according to Lemma \ref{lemma:1}, 
\begin{align*}
%V^1(d,\alpha) 
V^1(d) &=\sup_{ \alpha \in \mathbb{R}} \left\{ \Eb\left[ \left\{\alpha\gamma - \frac{\Delta (\alpha-Y)I(\alpha\geq Y) }{S_C(Y|X,A)} \right\} \frac{I\{A=d(X)\}}{\pi(A|X)}\right] \right\},
\end{align*}
With $d(X) = \mbox{sign}(f(X))$ and the surrogate function $\mathcal L$ of the indicator function, we define
\begin{align*}
V^{1}_\cL(f, \alpha) \equiv & \Eb\left[ \left\{\alpha \gamma - \frac{\Delta (\alpha-Y)I(\alpha\geq Y) }{S_C(Y|X,A)} \right\} \frac{\cL(Af(\Xb))}{\pi(A | \Xb)} \right],
\end{align*}
as the surrogate value function. We define the value function and surrogate value function corresponding to the working model as
\begin{align*}
V^{\wedge, 1}(d, \alpha) & \equiv  \Eb\left[ \left\{\alpha \gamma - \frac{\Delta (\alpha-Y)I(\alpha\geq Y) }{ \widehat S_C(Y|X,A)} \right\} \frac{I\{A = d(X)\}}{\pi(A|X)}\right],\\
V^{\wedge,1}_\cL(f, \alpha) & \equiv  \Eb \left[ \left\{\alpha \gamma - \frac{\Delta (\alpha-Y)I(\alpha\geq Y) }{\widehat S_C(Y|X,A)} \right\} \frac{\cL(Af(\Xb))}{\pi(A|X)}\right],
\end{align*}
respectively.  We also define 
\begin{align*}
V'^{,1}(d, \alpha) &\equiv \Eb\left[ \left\{\alpha \gamma - \frac{\Delta (\alpha-Y)I(\alpha\geq Y) }{ \widetilde S_C(Y|X,A)} \right\} \frac{I\{A = d(X)\}}{\pi(A|X)}\right],\\
V'^{,1}_\cL(f, \alpha) &\equiv \Eb \left[ \left\{\alpha \gamma - \frac{\Delta (\alpha-Y)I(\alpha\geq Y) }{\widetilde S_C(Y|X,A)} \right\} \frac{\cL(Af(\Xb))}{\pi(A|X)}\right],
\end{align*}
where 
$\widetilde S_C(\cdot|X,A)$ is the probability limit of $\widehat S_C(\cdot|X,A)$. Here and below, expectations in the ‘hat’ and ‘prime’ functions are taken with respect to the distribution of $(X,A,Y,\Delta)$, treating $\widehat S_C$ (and hence $\widetilde S_C$) as fixed. In addition, we denote
 $(\widetilde f, \widetilde \alpha) \in \underset{f \in \calH, \alpha \in \mathbb{R}}{\argmax} V'^{,1}_\cL( f, \alpha)$.

We first establish Fisher consistency of estimating optimal treatment rules under $V'^{,1}_\cL(f, \alpha)$ to justify the use of the surrogate loss $\cL(u)$.
Theorem~\ref{1:fisher} shows that optimizing the surrogate objective yields an optimal rule for the original criterion.
\begin{theorem}\label{1:fisher}
If $(\widetilde f, \widetilde \alpha)$ maximizes $V'^{,1}_\cL(f, \alpha)$, then $(\sign(\widetilde f), \widetilde \alpha)$  maximizes $V'^{,1}(d, \alpha)$.
\end{theorem}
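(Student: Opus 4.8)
The plan is to exploit the pointwise structure of the surrogate value function and reduce the claim to a conditional optimization over treatment values. First I would fix $\alpha$ and compare $V'^{,1}_\cL(f,\alpha)$ with $V'^{,1}(d,\alpha)$. Writing out the expectations, both take the form $\Eb\big[W(X,A;\alpha)\,g(Af(X))\big]$ where $W(X,A;\alpha)=\big(\alpha\gamma-\Delta(\alpha-Y)I(\alpha\geq Y)/\widetilde S_C(Y|X,A)\big)/\pi(A|X)$ and $g$ is either $\cL(\cdot)$ or $I(\cdot>0)$. Conditioning on $X$, the relevant quantity is $\Eb[W(X,1;\alpha)\mid X]\,g(f(X))+\Eb[W(X,-1;\alpha)\mid X]\,g(-f(X))$; writing $w_1(X;\alpha)=\Eb[W(X,1;\alpha)\mid X]$ and $w_{-1}(X;\alpha)=\Eb[W(X,-1;\alpha)\mid X]$, the problem decouples across $X$, so it suffices to analyze $h(t):=w_1 g(t)+w_{-1} g(-t)$ for scalars $w_1,w_{-1}$.

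The key step is the elementary observation that since $g$ (being either the indicator $I(t>0)$ or the surrogate $\cL(t)$) is nondecreasing, takes values in $[0,1]$, satisfies $g(t)+g(-t)=1$ (or $\to 1$ in the limits $|t|\to\infty$ for $\cL$; note $\cL(t)+\cL(-t)=1$ holds exactly for the piecewise-quadratic $\cL$ displayed above), maximizing $h(t)$ over $t\in\mathbb{R}$ drives $g(t)\to I(w_1>w_{-1})$ whenever $w_1\neq w_{-1}$, and the supremum value equals $\max(w_1,w_{-1})$ — exactly the value attained by the indicator choice $d(X)=\sign(w_1-w_{-1})$. Thus $\sup_{f(X)} h(f(X)) = \max(w_1,w_{-1})$, and this common optimal value is realized by $\sign(\widetilde f(X))$ as an indicator rule, because $\sign$ applied to the $\cL$-maximizer recovers the sign of $w_1-w_{-1}$. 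I would then argue that because $\widetilde f$ maximizes $V'^{,1}_\cL(\cdot,\widetilde\alpha)$ over $\calH$ and (under the tacit assumption that $\calH$ is rich enough, or by the pointwise argument applied directly) the pointwise optimum is attainable, $\sign(\widetilde f)$ achieves the pointwise optimum of the indicator-based objective for each $X$; integrating over $X$ gives that $\sign(\widetilde f)$ maximizes $d\mapsto V'^{,1}(d,\widetilde\alpha)$. Finally I would handle the joint optimization in $(f,\alpha)$: since for every fixed $f$ and its induced $d=\sign(f)$ we have $\sup_\alpha V'^{,1}_\cL(f,\alpha)$ matched in value by the corresponding indicator rule, and $\widetilde\alpha$ is the maximizing $\alpha$ for $\widetilde f$, the pair $(\sign(\widetilde f),\widetilde\alpha)$ maximizes $V'^{,1}(d,\alpha)$.

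The main obstacle I anticipate is the boundary/degeneracy bookkeeping rather than any deep difficulty. Two points need care: (i) when $w_1=w_{-1}$ the optimal $t$ is indeterminate and $\sign(\widetilde f(X))$ may be either sign, but both yield the same value $w_1=w_{-1}$, so the equality of objectives is unaffected — this set must simply be noted as harmless; and (ii) for the smooth surrogate $\cL$ the maximum over $t$ is not attained at a finite $t$ (it is approached as $t\to\pm\infty$), so strictly speaking one works with suprema and uses that $\sign(\cdot)$ of any sufficiently large-magnitude near-optimizer equals $\sign(w_1-w_{-1})$; invoking that $\widetilde f$ is an actual maximizer in $\calH$ (assumed to exist) sidesteps this. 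A secondary subtlety is the measurability of $X\mapsto\sign(\widetilde f(X))$ and the interchange of supremum and integral, which follows from standard arguments since $w_1,w_{-1}$ are measurable functions of $X$ and the per-$X$ optimization has a measurable selection. None of these obstructs the core sign-matching argument.
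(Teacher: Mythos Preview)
Your approach is essentially the same as the paper's: condition on $X$, write the objective as $w_1(X;\alpha)\,g(f(X))+w_{-1}(X;\alpha)\,g(-f(X))$, exploit monotonicity of $g$ together with $\cL(t)+\cL(-t)=\text{const}$ to show the pointwise optimum coincides with the indicator rule $\sign(w_1-w_{-1})$, and then pass to the joint $(f,\alpha)$ maximum. One small correction: for the displayed piecewise-quadratic $\cL$, the maximum over $t$ \emph{is} attained at any finite $t$ with $|t|\ge\delta$ (where $\cL$ saturates), so your concern in point~(ii) is unnecessary and the paper in fact uses exactly this to conclude $|\widetilde f(X)|\ge\delta$ directly; also, your last step on the joint $(f,\alpha)$ optimization should be phrased as ``for each fixed $\alpha$, $\max_f V'^{,1}_\cL(f,\alpha)$ and $\max_d V'^{,1}(d,\alpha)$ coincide (up to a constant factor), hence their $\alpha$-maximizers agree,'' rather than claiming value matching for every fixed $f$.
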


We then establish the consistency of the estimated treatment rule with a universal kernel, e.g., when $\calH$ is an RKHS endowed with a Gaussian kernel. Estimation error
has two potential sources, the first of which is uncertainty in the estimated conditional survival functions, and the second is from the estimation of
the optimal treatment rules.

As an intermediate step for the consistency proof, we then establish the following excess risk of the estimated treatment rule.
\begin{theorem}\label{1:excess}
	For any measurable function $f$ and $\alpha\in \mathbbm R$,
	\begin{equation*}
\sup_{f^{\prime} \in \cF,\alpha^{\prime} \in \mathbb{R}} V'^{,1}(\sign(f^{\prime}),\alpha^{\prime}) - V'^{,1}(\sign(f), \alpha) \leq V'^{,1}_\cL(\widetilde f, \widetilde \alpha) - V'^{,1}_\cL(f, \alpha).
	\end{equation*}
\end{theorem}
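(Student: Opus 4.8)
The plan is to derive the excess risk bound for $V'^{,1}$ from the Fisher consistency result in Theorem~\ref{1:fisher} together with the pointwise domination of the indicator by the surrogate $\cL$. First I would observe that the surrogate $\cL$ is constructed so that $I\{A = \sign(f(X))\} = I\{Af(X) > 0\} \le \cL(Af(X))$ pointwise (indeed $0 \le \cL \le 1$ with $\cL(u) = 1$ for $u > \delta$ and $\cL(u) = 0$ for $u < -\delta$, and more precisely $\cL$ sits above the step function on the relevant region). Because the coefficient $\alpha\gamma - \Delta(\alpha - Y)I(\alpha \ge Y)/\widetilde S_C(Y\mid X,A)$ can change sign, the inequality $I\{Af(X) > 0\} \le \cL(Af(X))$ does not immediately transfer to the weighted expectation; the standard fix, used in this literature, is to decompose $V'^{,1}(\sign(f),\alpha)$ by conditioning on $(X,A)$ and exploiting that $I\{A = \sign(f(X))\} + I\{A = -\sign(f(X))\} = 1$, so that one rewrites the value function against a fixed reference (e.g. the regime that always disagrees, or uses the $\frac12\cL$ trick) to make the domination monotone in the correct direction.

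The key steps in order would be: (i) write $V'^{,1}_\cL(f,\alpha) - V'^{,1}(\sign(f),\alpha) = \Eb[\,\{\alpha\gamma - \Delta(\alpha-Y)I(\alpha\ge Y)/\widetilde S_C\}\,(\cL(Af(X)) - I\{A=\sign(f(X))\})/\pi(A\mid X)\,]$ and argue, via the conditional-expectation identity $\Eb[\Delta(\alpha-Y)I(\alpha\ge Y)/\widetilde S_C \mid X,A] = \Eb[(\alpha-T)I(T\le\alpha)\mid X,A]$ from the proof of Lemma~\ref{lemma:1}, that after conditioning the relevant coefficient has a definite sign structure making $V'^{,1}_\cL(f,\alpha) \ge V'^{,1}(\sign(f),\alpha)$ for every $(f,\alpha)$; (ii) by definition $(\widetilde f,\widetilde\alpha)$ maximizes $V'^{,1}_\cL$, so $V'^{,1}_\cL(\widetilde f,\widetilde\alpha) \ge V'^{,1}_\cL(f',\alpha') \ge V'^{,1}(\sign(f'),\alpha')$ for all $(f',\alpha')$, hence $V'^{,1}_\cL(\widetilde f,\widetilde\alpha) \ge \sup_{f'\in\cF,\alpha'}V'^{,1}(\sign(f'),\alpha')$; (iii) combine with the step-(i) inequality applied to $(f,\alpha)$: $V'^{,1}_\cL(f,\alpha) \ge V'^{,1}(\sign(f),\alpha)$, and rearrange to get $\sup_{f',\alpha'}V'^{,1}(\sign(f'),\alpha') - V'^{,1}(\sign(f),\alpha) \le V'^{,1}_\cL(\widetilde f,\widetilde\alpha) - V'^{,1}(\sign(f),\alpha) \le V'^{,1}_\cL(\widetilde f,\widetilde\alpha) - V'^{,1}_\cL(f,\alpha)$, which is exactly the claimed bound.

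The main obstacle I expect is step (i): establishing the pointwise domination $V'^{,1}_\cL(f,\alpha) \ge V'^{,1}(\sign(f),\alpha)$ despite the weight not being sign-definite. The resolution is that this domination need not hold termwise; rather, after conditioning on $(X,A)$ and using $\Eb[(\alpha-T)I(T\le\alpha)\mid X,A]\ge 0$ (the quantity $(\alpha-T)I(T\le\alpha)$ is nonnegative), the effective per-stratum coefficient multiplying $\cL - I$ is $\alpha\gamma \cdot (\text{positive weights}) - (\text{nonnegative}) \cdot(\ldots)$; one then argues separately on the event $\{f(X) \ge 0\}$ versus $\{f(X) < 0\}$ using that $\cL$ dominates the indicator from above on one side and is dominated on the other, exactly as in the Fisher-consistency argument of Theorem~\ref{1:fisher} — so in effect Theorem~\ref{1:excess} is a uniform-in-$(f,\alpha)$ repackaging of the same pointwise comparison that underlies Theorem~\ref{1:fisher}. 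If the clean one-line domination fails, the fallback is to mimic \cite{Qi2019EstimatingID}: split $V'^{,1}$ into the part where the weight is positive and where it is negative, bound each using the corresponding one-sided inequality for $\cL$, and add.
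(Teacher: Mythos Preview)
Your step (iii) contains a direction error that breaks the argument. You want to chain
\[
\sup_{f',\alpha'}V'^{,1}(\sign(f'),\alpha') - V'^{,1}(\sign(f),\alpha)\;\le\;V'^{,1}_\cL(\widetilde f,\widetilde\alpha) - V'^{,1}(\sign(f),\alpha)\;\le\;V'^{,1}_\cL(\widetilde f,\widetilde\alpha) - V'^{,1}_\cL(f,\alpha),
\]
but the second inequality is equivalent to $V'^{,1}_\cL(f,\alpha)\le V'^{,1}(\sign(f),\alpha)$, which is the \emph{opposite} of the step-(i) domination you set out to prove. You would need both directions simultaneously, hence equality, which is false for generic $f$. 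Moreover, the premise of step (i) itself --- that $I\{u>0\}\le\cL(u)$ pointwise --- is false for the surrogate here: with the paper's example $\cL$, at $u=\delta/2$ the indicator is $1$ while $\cL(u)=1-\tfrac12(1/2)^2<1$. So neither a uniform upper nor lower domination of the indicator by $\cL$ is available, and the whole ``domination then subtract'' strategy collapses.

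The paper's proof does not compare $V'^{,1}_\cL$ and $V'^{,1}$ globally at all. It conditions on $X$, writes both the surrogate excess $V'^{,1}_\cL(\widetilde f,\widetilde\alpha)-V'^{,1}_\cL(f,\alpha)$ and the true excess $V'^{,1}(d_0,\alpha_0)-V'^{,1}(\sign(f),\alpha)$ in terms of $\beta_a(X)=\Eb[\alpha\gamma-\Delta(\alpha-Y)I(\alpha\ge Y)/\widetilde S_C\mid X,A=a]$ (and the analogous $\beta^\star_a$ at $\widetilde\alpha$), and then bounds the \emph{difference of the two excesses} directly, case-splitting on $\sign(\beta^\star(X))$. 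The key structural input is the symmetry $\cL(u)+\cL(-u)=\text{const}$, which (via Theorem~\ref{1:fisher}) forces $|\widetilde f(X)|\ge\delta$ so that $\cL(\pm\widetilde f(X))$ collapses to $0$ or the constant; this makes the surrogate excess at the optimum an exact multiple of the indicator excess at the optimum, eliminating any need for a one-sided inequality between $\cL$ and $I$. Your fallback of splitting by the sign of the weight is closer in spirit but still mis-targeted: the decisive case split is on $\sign(\beta^\star(X))$, which pins down the optimizers $\widetilde f$ and $d_0$, not on the sign of the raw per-observation weight.
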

\noindent By leveraging Theorem~\ref{1:excess}, we establish the consistency of the estimated treatment rule.
\begin{theorem}\label{1:limit}
	Suppose that $\sup_{x \in \mathcal \cX,t\leq h}|\widehat S_C(t|x,a) - S_C(t|x,a)| \overset{p}{\to} 0$ for both $a=-1,1$, then we have the following convergence in probability,
	\begin{equation*}
	\lim_{n \rightarrow \infty} \sup_{\alpha\in \mathbb{R}} V^1 (\sign(\widehat{f}\, ), {\alpha}) = V^1(\sign(f^\star), \alpha^\star) ,
	\end{equation*}
where $(f^\star,\alpha^\star)$ is defined as $(f^\star, \alpha^\star) \in \underset{\alpha \in \mathbb{R}, f\in\calH}{\arg\max} \,  V^1(\sign(f), \alpha)$, and $(\widehat{f}, \widehat{\alpha})$ solves the empirical optimization problem \eqref{eq:ev1}.
\end{theorem}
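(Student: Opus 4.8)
The plan is to reduce the statement to a surrogate excess-risk inequality via Theorem~\ref{1:excess}, and then to show that this surrogate excess risk vanishes by a uniform-convergence (empirical-process) argument that simultaneously absorbs the error from plugging in $\wh S_C$. Two reductions are for free. First, under the hypothesis the probability limit of $\wh S_C$ is $\wt S_C=S_C$, so $V'^{,1}=V^1$, $V'^{,1}_\cL=V^1_\cL$, and $(\wt f,\wt\alpha)$ maximizes $V^1_\cL$. Second, since $0\le T\le h$ almost surely and the inverse-censoring/propensity weights integrate to one (using $\cL(u)+\cL(-u)\equiv 1$), the pre-supremum objectives $\alpha\mapsto V^1(\sign(f),\alpha)$ and $\alpha\mapsto V^1_\cL(f,\alpha)$ are affine and decreasing for $\alpha>h$ and equal to $\alpha\gamma\le 0$ for $\alpha\le 0$; hence every supremum over $\alpha\in\mathbb R$ may be replaced by a supremum over the compact set $[0,h]$ (and one may take $\wh\alpha\in[0,h]$). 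Finally, $\sup_\alpha V^1(\sign(\wh f),\alpha)\le\sup_{f\in\calH,\alpha}V^1(\sign(f),\alpha)=V^1(\sign(f^\star),\alpha^\star)$ holds deterministically, so only the matching lower bound up to $o_p(1)$ remains.

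Applying Theorem~\ref{1:excess} at $(f,\alpha)=(\wh f,\wh\alpha)$, and using $f^\star\in\calH\subseteq\cF$ together with $V^1(\sign(\wh f),\wh\alpha)\le\sup_\alpha V^1(\sign(\wh f),\alpha)$, gives
\[
0\;\le\;V^1(\sign(f^\star),\alpha^\star)-\sup_\alpha V^1(\sign(\wh f),\alpha)\;\le\;V^1_\cL(\wt f,\wt\alpha)-V^1_\cL(\wh f,\wh\alpha).
\]
So it suffices to prove $V^1_\cL(\wt f,\wt\alpha)-V^1_\cL(\wh f,\wh\alpha)\overset{p}{\to}0$, i.e.\ that $(\wh f,\wh\alpha)$ is an asymptotic maximizer of the population surrogate value $V^1_\cL$ over $\calH\times[0,h]$.

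For this I would first establish $\sup_{f,\,\alpha\in[0,h]}|\wh V_{\cL,n}(f,\alpha)-V^1_\cL(f,\alpha)|\overset{p}{\to}0$ over the RKHS search region of \eqref{eq:ev1}, where $\wh V_{\cL,n}$ denotes the empirical objective of \eqref{eq:ev1} (the empirical-mean, $\wh S_C$-plug-in version of \eqref{eq:estimated_treatment_regime_CVaR}), by splitting through the population-mean-but-$\wh S_C$ objective $V^{\wedge,1}_\cL$: the plug-in gap $\sup|V^{\wedge,1}_\cL-V^1_\cL|$ is $O\!\bigl(\eta_C^{-2}\sup_{x,t\le h}|\wh S_C(t|x,a)-S_C(t|x,a)|\bigr)\overset{p}{\to}0$ by the positivity conditions and the boundedness of $T,\alpha$ and $\cL$, while the empirical-process gap $\sup|\wh V_{\cL,n}-V^{\wedge,1}_\cL|\overset{p}{\to}0$ by a uniform law of large numbers for the uniformly bounded function class indexed by $\alpha\in[0,h]$ and $f$ in the RKHS ball (Rademacher complexity $O(R_n/\sqrt n)=o(1)$ once the regularization keeps $R_n=o(\sqrt n)$). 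A standard argmax argument then finishes: since $(\wh f,\wh\alpha)$ maximizes $\wh V_{\cL,n}$ up to the vanishing regularizer $\lambda_n\|\cdot\|_\calH^2$ and $(\wt f,\wt\alpha)$ maximizes $V^1_\cL$, and since by universality of the kernel the maximum of $V^1_\cL$ over the search region of \eqref{eq:ev1} converges to $\sup_{f\in\calH,\alpha}V^1_\cL=V^1_\cL(\wt f,\wt\alpha)$, we obtain $V^1_\cL(\wt f,\wt\alpha)-V^1_\cL(\wh f,\wh\alpha)\le 2\sup|\wh V_{\cL,n}-V^1_\cL|+\lambda_n\|\wt f\|_\calH^2+o(1)\overset{p}{\to}0$. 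Combined with the previous display this proves the claim; Theorem~\ref{1:fisher} can additionally be invoked to identify the limit with the Bayes-optimal value if desired.

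The main difficulty is the last step: getting the uniform law of large numbers over the RKHS search region while that region is allowed to grow (so that $\wt f$ and $f^\star$ are recovered in the limit), which forces a careful trade-off between the ball radius $R_n$, the regularization $\lambda_n$, and $\sqrt n$ — handled by a localization/self-bounding bound on $\|\wh f\|_\calH$ — and doing this \emph{simultaneously with} controlling the plug-in error in $\wh S_C$, which enters multiplicatively through the inverse censoring weights and is kept under control only by the positivity bound $S_C\ge\eta_C$ and the boundedness $T\le h$. The reductions in the first paragraph and the application of the excess-risk bound (Theorem~\ref{1:excess}) are, by comparison, routine.
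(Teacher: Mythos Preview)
Your proposal is correct and follows essentially the same route as the paper: reduce to the surrogate excess risk $V^1_\cL(\wt f,\wt\alpha)-V^1_\cL(\wh f,\wh\alpha)$ via Theorem~\ref{1:excess} (the paper routes this through the intermediate $V'^{,1}$, which collapses to $V^1$ under the hypothesis, exactly as you note), then kill that gap by an argmax/sandwich argument combining the empirical optimality of $(\wh f,\wh\alpha)$, the plug-in error $|V^{\wedge,1}_\cL-V^1_\cL|\to 0$, and an empirical-process step. Your explicit compactification of $\alpha$ to $[0,h]$ and your Rademacher/uniform-LLN treatment over the RKHS ball are more careful than the paper, which simply asserts that the residual ``follows standard empirical process theory''; in particular the paper never isolates the growing-radius/$\lambda_n$ trade-off you flag as the main difficulty, so your version is a strict sharpening of the same argument rather than a different one.
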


The assumption on $\widehat S_C(t|x,a)$ is quite general. In particular, the consistency with certain rates can be achieved by parametric models as well as nonparametric methods such as survival forests \citep{cuiy2017some} and nonparametric kernel smoothing methods \citep{sun2019}.
The rate of convergence of the estimated treatment rule can be studied under certain standard assumptions in the learning theory literature \citep{steinwart2008support}, and we omit the details here.

Analogously, for the buffered criterion, we define
\begin{align*}
V^{2}(d, c) & \equiv  \Eb\left[- \left\{\frac{\Delta \max(0, c(-Y+\tau)+1)}{S_C(Y|X,A)}  \right\} \frac{I\{A = d(X)\}}{\pi(A|X)}\right],\\
V^{2}_\cL(f, c) & \equiv  \Eb\left[- \left\{\frac{\Delta \max(0, c(-Y+\tau)+1)}{S_C(Y|X,A)}  \right\} \frac{\cL(Af(\Xb))}{\pi(A | \Xb)} \right],\\
V^{\wedge, 2}(d, c) & \equiv  \Eb\left[- \left\{\frac{\Delta \max(0, c(-Y+\tau)+1)}{\widehat S_C(Y|X,A)}  \right\} \frac{I\{A = d(X)\}}{\pi(A|X)}\right],\\
V^{\wedge,2}_\cL(f, c) & \equiv  \Eb \left[- \left\{\frac{\Delta \max(0, c(-Y+\tau)+1)}{\widehat S_C(Y|X,A)}  \right\} \frac{\cL(Af(\Xb))}{\pi(A|X)}\right],\\
V'^{,2}(d, c) & \equiv  \Eb\left[- \left\{\frac{\Delta \max(0, c(-Y+\tau)+1)}{\widetilde S_C(Y|X,A)}  \right\} \frac{I\{A = d(X)\}}{\pi(A|X)}\right],\\
V'^{,2}_\cL(f, c) & \equiv  \Eb \left[- \left\{\frac{\Delta \max(0, c(-Y+\tau)+1)}{\widetilde S_C(Y|X,A)}  \right\} \frac{\cL(Af(\Xb))}{\pi(A|X)}\right],
\end{align*}
and $(\widetilde f, \widetilde c) \in \argmax_{f \in \calH, c\geq 0} V'^{,2}_\cL( f, c)$.

We then have the following theorems for the buffered criterion. 

\begin{theorem}\label{2:fisher}
If $(\widetilde f, \widetilde c)$ maximizes $V'^{,2}_\cL(f, c)$, then $(\sign(\widetilde f), \widetilde c)$ maximizes $V'^{,2}(d, c)$.
\end{theorem}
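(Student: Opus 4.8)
\textbf{Proof proposal for Theorem~\ref{2:fisher}.}

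The plan is to mirror the structure of the Fisher-consistency argument for the CVaR criterion (Theorem~\ref{1:fisher}), exploiting the fact that for any fixed $c \geq 0$ both $V'^{,2}(d,c)$ and $V'^{,2}_\cL(f,c)$ are linear integral functionals in which the treatment rule enters only through a pointwise weight against a fixed, nonnegative ``contrast'' built from the censoring-adjusted loss. First I would fix $c = \widetilde c$ and carry out a conditioning/pointwise-optimization argument: write $V'^{,2}(d,c)$ as an expectation over $\Xb$ of a conditional quantity obtained by integrating out $(Y,\Delta,A)$ given $\Xb$, and observe that for each $\bx$ the integrand depends on $d$ only through the binary choice $d(\bx)\in\{1,-1\}$, multiplied by nonnegative weights $\Delta \max(0,c(-Y+\tau)+1)/\widetilde S_C(Y|X,A)$ and $1/\pi(A|\bx)$. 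Define the conditional contrast
\begin{align*}
\rho_c(\bx) \equiv \Eb\!\left[\frac{\Delta \max(0, c(-Y+\tau)+1)}{\widetilde S_C(Y|\Xb,A)\,\pi(A|\Xb)}\,\middle|\,\Xb=\bx, A=-1\right] - \Eb\!\left[\frac{\Delta \max(0, c(-Y+\tau)+1)}{\widetilde S_C(Y|\Xb,A)\,\pi(A|\Xb)}\,\middle|\,\Xb=\bx, A=1\right],
\end{align*}
so that (up to the overall minus sign) the rule-optimal choice is $d^\star(\bx) = \sign(\rho_c(\bx))$ pointwise, with ties broken arbitrarily on $\{\rho_c = 0\}$.

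Next I would show that the surrogate maximizer realizes the same sign. For fixed $c$, the map $f \mapsto V'^{,2}_\cL(f,c)$ is again an expectation over $\Xb$ of a conditional quantity that, pointwise at $\bx$, has the form $a_+(\bx)\,\cL(f(\bx)) + a_-(\bx)\,\cL(-f(\bx))$ for nonnegative coefficients $a_{\pm}(\bx)$ equal to the two conditional expectations appearing in $\rho_c(\bx)$ (with the minus sign again absorbed into the objective being maximized). Because $\cH$ is rich enough to choose $f(\bx)$ freely pointwise in the relevant sense — the same richness hypothesis implicitly used in Theorem~\ref{1:fisher} — it suffices to analyze, for each $\bx$, the scalar optimization $\max_{t\in\mathbb R}\{a_+\cL(t)+a_-\cL(-t)\}$. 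Using the surrogate properties $\cL(t)+\cL(-t)\equiv 1$ for $|t|\le\delta$ together with monotonicity of $\cL$ (here is where I would invoke the specific DC surrogate, or more abstractly the assumption that $\cL$ is nondecreasing with $\cL(-\infty)=0$, $\cL(+\infty)=1$, and symmetric about the origin in the sense $\cL(t)+\cL(-t)=1$), one checks that the maximizer $t^\star$ satisfies $\sign(t^\star)=\sign(a_+ - a_-)=\sign(\rho_c(\bx))$ whenever $\rho_c(\bx)\neq 0$: the objective is maximized by pushing $\cL$ toward $1$ on the branch with the larger coefficient. Hence $\sign(\widetilde f(\bx))$ agrees with $d^\star(\bx)$ wherever the contrast is nonzero, and on $\{\rho_c=0\}$ the choice is immaterial for the value. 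Evaluating $V'^{,2}(\sign(\widetilde f),\widetilde c)$ then gives the same value as $\sup_d V'^{,2}(d,\widetilde c)$.

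Finally, I would close the argument over $c$: since $(\widetilde f,\widetilde c)$ jointly maximizes $V'^{,2}_\cL$, the pair $(\sign(\widetilde f),\widetilde c)$ attains $\sup_d V'^{,2}(d,\widetilde c) \le \sup_{d,c} V'^{,2}(d,c)$; conversely, for any $(d,c)$, Fisher consistency at each fixed $c$ (the pointwise comparison above applied with $\cL$ replaced by the indicator, which is the limiting/extremal case of the same scalar optimization) yields $V'^{,2}(d,c) \le V'^{,2}_\cL(f_{d},c) \le V'^{,2}_\cL(\widetilde f,\widetilde c) = V'^{,2}(\sign(\widetilde f),\widetilde c)$ for a suitable $f_d$ with $\sign(f_d)=d$, so no $(d,c)$ beats $(\sign(\widetilde f),\widetilde c)$. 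The main obstacle I anticipate is the careful bookkeeping of the overall sign convention (the objective $V'^{,2}$ carries an explicit $-$, so ``maximizing'' it means minimizing the buffered-loss integral) and making the pointwise-optimization step rigorous under whatever richness assumption on $\cH$ the paper is using — in particular ensuring the ties on $\{\rho_c=0\}$ genuinely do not affect either value function, and that the joint (rather than sequential) maximization in $(f,c)$ does not create a gap. These are the same technical points that underlie Theorem~\ref{1:fisher}, so I would lean on that proof's template wherever possible.
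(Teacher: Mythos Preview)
Your approach is correct and essentially the same as the paper's: condition on $\Xb$, use the symmetry $\cL(u)+\cL(-u)=\text{const}$ (the paper takes this constant to be $2$, and it holds for \emph{all} $u$, not just $|u|\le\delta$) to reduce the pointwise surrogate problem to a linear one in $\cL(f(\Xb))$, identify the optimal sign with $\sign(\beta_1-\beta_{-1})$, and observe that the same contrast governs the optimal $d$ in $V'^{,2}$. The paper closes the joint argument over $c$ by showing that $\max_{f,c}V'^{,2}_\cL$ and $\max_{d,c}V'^{,2}$ both reduce to the same $\max_c\,\Eb_\Xb[I(\beta>0)\beta_1+I(\beta<0)\beta_{-1}]$ expression (up to a constant factor), rather than via your sandwich $V'^{,2}(d,c)=V'^{,2}_\cL(f_d,c)\le V'^{,2}_\cL(\widetilde f,\widetilde c)=V'^{,2}(\sign(\widetilde f),\widetilde c)$, but the two routes are equivalent.
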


\begin{theorem}\label{2:excess}
	For any measurable function $f$ and $c\geq 0$,
	\begin{equation*}
\sup_{f'\in \calH,c'\geq 0} V'^{,2}(\sign(f'),c') - V'^{,2}(\sign(f), c) \leq V'^{,2}_\cL(\widetilde f, \widetilde c) - V'^{,2}_\cL(f, c).
	\end{equation*}
\end{theorem}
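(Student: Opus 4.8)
The plan is to mimic the proof of Theorem~\ref{1:excess}, with the scalar nuisance $c\ge 0$ playing the role of $\alpha$. Write $w_c \equiv \Delta\max(0,c(-Y+\tau)+1)/\widetilde S_C(Y\mid X,A)$, which is nonnegative for every $c\ge 0$, and set $p_c(x)\equiv\Eb[w_c\mid X=x,A=1]\ge 0$, $q_c(x)\equiv\Eb[w_c\mid X=x,A=-1]\ge 0$. Taking the inner conditional expectation given $(X,A)$ exactly as in Lemma~\ref{lemma:1}, one gets $V'^{,2}(d,c)=-\Eb_X[p_c(X)I\{d(X)=1\}+q_c(X)I\{d(X)=-1\}]$, and, using that the stated surrogate obeys $\cL(u)+\cL(-u)=1$, also $V'^{,2}_\cL(f,c)=-\Eb_X[q_c(X)+\cL(f(X))\{p_c(X)-q_c(X)\}]$. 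Thus both objectives reduce, at each $x$, to a two-point problem in which one wants $d(x)$, resp.\ the level $\cL(f(x))$, to track whichever of $p_c(x),q_c(x)$ is smaller.

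First I would remove the supremum on the left-hand side. By Theorem~\ref{2:fisher}, $(\sign(\widetilde f),\widetilde c)$ maximizes $V'^{,2}(d,c)$; since $\widetilde f\in\calH$, this gives $\sup_{f\in\calH,\,c\ge 0}V'^{,2}(\sign(f),c)=V'^{,2}(\sign(\widetilde f),\widetilde c)$, while $V'^{,2}_\cL(\widetilde f,\widetilde c)=\sup_{f\in\calH,\,c\ge 0}V'^{,2}_\cL(f,c)$ by definition of $(\widetilde f,\widetilde c)$. The asserted inequality is therefore equivalent to
\[
V'^{,2}(\sign(\widetilde f),\widetilde c)-V'^{,2}_\cL(\widetilde f,\widetilde c)\ \leq\ V'^{,2}(\sign(f),c)-V'^{,2}_\cL(f,c)\qquad\text{for every measurable }f\text{ and }c\ge 0,
\]
i.e.\ the \emph{value deficit of the surrogate} is smallest at the surrogate maximizer. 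The left side is controlled because $\cL$ is nondecreasing with $\cL(u)=0$ for $u\le-\delta$ and $\cL(u)=1$ for $u\ge\delta$: a surrogate-optimal $\widetilde f$ can be taken with $\cL(\widetilde f(x))\in\{0,1\}$ and $\sign(\widetilde f(x))=\sign\{q_c(x)-p_c(x)\}$ for a.e.\ $x$, so $V'^{,2}_\cL(\widetilde f,\widetilde c)=V'^{,2}(\sign(\widetilde f),\widetilde c)$ and the left side is $0$; the same saturation shows $\inf_t\{p_c(x)\cL(t)+q_c(x)\cL(-t)\}=\min\{p_c(x),q_c(x)\}$, matching the unconstrained optimum of the true conditional loss.

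The heart of the argument, and the main obstacle, is then the pointwise comparison: for each fixed $c$ and a.e.\ $x$, the excess of the true conditional loss $p_c(x)I\{\sign(f(x))=1\}+q_c(x)I\{\sign(f(x))=-1\}$ over $\min\{p_c(x),q_c(x)\}$ must be dominated by the excess of the surrogate conditional loss $p_c(x)\cL(f(x))+q_c(x)\cL(-f(x))$ over the same minimum. This is precisely the statement that $\cL$ is a calibrated surrogate in the present weighting, and establishing it needs a case split on the sign of $q_c(x)-p_c(x)$ and on whether $\sign(f(x))$ agrees with it, using the monotonicity of $\cL$ together with $\cL(u)+\cL(-u)=1$; the delicate regime is when $f(x)$ is on the wrong side of $0$ but inside the transition band $(-\delta,\delta)$, where the indicator and $\cL$ disagree most. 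Granting this pointwise inequality, one integrates over $X$ and then takes the supremum over $c\ge 0$ on the left and the infimum over $(f,c)$ on the right; as in Theorem~\ref{1:excess}, the only extra care relative to the $\alpha$-version is that $c$ enters through $w_c$, but since $w_c\ge 0$ for all $c\ge 0$ the comparison holds at each fixed $c$ and the suprema and infima over $c$ may be moved outside.
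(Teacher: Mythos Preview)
There is a genuine gap in your proposal, and it sits exactly in the ``delicate regime'' you flagged but did not resolve. With the normalization $\cL(u)+\cL(-u)=1$ (matching the explicit piecewise formula in the paper), your pointwise comparison
\[
p_c(x)\,I\{\sign f(x)=1\}+q_c(x)\,I\{\sign f(x)=-1\}\ \le\ p_c(x)\,\cL(f(x))+q_c(x)\,\cL(-f(x))
\]
is false in general. Take $p_c(x)=2$, $q_c(x)=1$ and $f(x)=\delta/2$: the left side equals $2$, while the right side equals $2\cdot\tfrac{7}{8}+1\cdot\tfrac{1}{8}=\tfrac{15}{8}<2$. This is precisely the situation where $\sign f(x)$ is on the wrong side of $0$ but $f(x)\in(-\delta,\delta)$: the surrogate loss there is \emph{smaller} than the $0$--$1$ loss, not larger. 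Since your entire argument hinges on ``LHS $=0$'' and ``RHS $\ge 0$ pointwise,'' the second step does not go through, and in fact the excess-risk inequality as you have framed it cannot hold under the normalization $\cL\in[0,1]$.

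The paper's proof avoids this by working with $\cL(u)+\cL(-u)=2$ (stated explicitly at the start of the proofs of Theorems~\ref{1:fisher} and \ref{2:fisher}), so that $\cL(u)\ge I(u>0)$ and $\cL(-u)\ge I(u<0)$ for every $u$; this pointwise domination of the indicator is exactly what makes the comparison work. But then your reduction ``LHS $=0$'' no longer holds either, because $V'^{,2}_\cL(\widetilde f,\widetilde c)$ equals $2\max\{\beta_1^\star,\beta_{-1}^\star\}$ conditionally on $X$, not $\max\{\beta_1^\star,\beta_{-1}^\star\}$. The paper therefore does \emph{not} split the inequality into separate LHS/RHS pieces; instead it computes, conditionally on $X$, the single difference
\[
\bigl[\,V'^{,2}_\cL(\widetilde f,\widetilde c)-V'^{,2}_\cL(f,c)\,\bigr]-\bigl[\,V'^{,2}(d_0,c_0)-V'^{,2}(\sign f,c)\,\bigr],
\]
simplifies it case-by-case on $\sign\beta^\star$ to $(\beta_a^\star-\beta_a)+\bigl(\cL(-af)-I(\sign f=-a)\bigr)\beta$ with $a=\sign\beta^\star$, and uses $0\le\cL(-af)-I(\sign f=-a)\le 1$ (valid only for $\cL\in[0,2]$) to bound this below by $\max\{\beta_1^\star,\beta_{-1}^\star\}-\max\{\beta_1,\beta_{-1}\}$, which is nonnegative after taking $\Eb_X$ by optimality of $\widetilde c$. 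The factor of $2$ in the surrogate is not cosmetic: it is what supplies the extra $(\beta_a^\star-\beta_a)$ term that compensates for possible misclassification in the transition band.
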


\begin{theorem}\label{2:limit}
	Suppose $\sup_{x \in \mathcal \cX,t\leq \tau}|\widehat S_C(t|x,a) - S_C(t|x,a)| \overset{p}{\to} 0$ for both $a=-1,1$, then we have the following convergence in probability,
	\begin{equation*}
	\lim_{n \rightarrow \infty} \sup_{c\geq 0} V^2(\sign(\widehat{f}), {c}) = V^2(\sign(f^\star), c^\star) ,
	\end{equation*}
where $(f^\star,c^\star)$ is defined as $(f^\star, c^\star) \in \underset{c \geq 0, f\in\calH}{\arg\max} \, V^2(\sign(f), c)$, and $(\widehat{f}, \widehat{c})$ solves the empirical optimization problem \eqref{eq:ev2}.
\end{theorem}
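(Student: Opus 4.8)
The plan is to follow the scheme used for Theorem~\ref{1:limit}, replacing the scalar nuisance $\alpha\in\mathbb R$ by $c\ge 0$ and the CVaR integrand by the buffered one $-\Delta\max(0,c(\tau-Y)+1)/S_C(Y|X,A)$. Under the hypothesis $\sup_{x,t\le\tau}|\widehat S_C(t|x,a)-S_C(t|x,a)|\overset{p}{\to}0$, uniqueness of probability limits gives $\widetilde S_C=S_C$, so the primed objects coincide with the unprimed ones; in particular $(\widetilde f,\widetilde c)$ maximizes $V^2_\cL$ and Theorems~\ref{2:fisher}--\ref{2:excess} hold with $V'^{,2}=V^2$ and $V'^{,2}_\cL=V^2_\cL$. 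Since $\widehat f\in\calH$, we automatically have $\sup_{c\ge0}V^2(\sign(\widehat f),c)\le V^2(\sign(f^\star),c^\star)$, so only the matching lower bound in probability must be established.

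The first substantive step is to confine $c$ to a fixed compact interval $[0,\bar c]$. Because $c\mapsto\max(0,c(\tau-Y)+1)$ is convex, $V^2(d,c)$, $V^2_\cL(f,c)$ and their plug-in and empirical analogues are concave in $c$; together with $T\le h$ almost surely and the positivity bounds on $\pi(\cdot\mid\cdot)$ and $S_C(\cdot\mid\cdot,\cdot)\ge\eta_C$, the derivative in $c$ becomes eventually strictly negative, so with probability tending to one every maximizing $c$ in play—population, plug-in, empirical—lies in $[0,\bar c]$, on which the buffered integrand is uniformly bounded. Applying the excess-risk inequality of Theorem~\ref{2:excess} at $(f,c)=(\widehat f,\widehat c)$ gives $V^2(\sign(\widehat f),\widehat c)\ge V^2(\sign(f^\star),c^\star)-\big(V^2_\cL(\widetilde f,\widetilde c)-V^2_\cL(\widehat f,\widehat c)\big)$, and since $\sup_{c\ge0}V^2(\sign(\widehat f),c)\ge V^2(\sign(\widehat f),\widehat c)$, the whole problem reduces to showing the surrogate excess risk $V^2_\cL(\widetilde f,\widetilde c)-V^2_\cL(\widehat f,\widehat c)$ is $o_p(1)$.

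For that I would use the standard five-term split: two nuisance-plug-in gaps $|V^2_\cL-V^{\wedge,2}_\cL|$ at the deterministic point $(\widetilde f,\widetilde c)$ and at the random point $(\widehat f,\widehat c)$; two empirical-process gaps $|V^{\wedge,2}_\cL-\widehat V^{\wedge,2}_\cL|$ at the same two points, where $\widehat V^{\wedge,2}_\cL$ is the sample-average objective of \eqref{eq:ev2}; and the optimization gap $\widehat V^{\wedge,2}_\cL(\widetilde f,\widetilde c)-\widehat V^{\wedge,2}_\cL(\widehat f,\widehat c)\le0$ by optimality of $(\widehat f,\widehat c)$. The plug-in gaps are $O\!\big(\eta_C^{-2}\sup_{x,t\le\tau}|\widehat S_C-S_C|\big)=o_p(1)$ using $|1/\widehat S_C-1/S_C|\le|\widehat S_C-S_C|/\eta_C^2$ on $\{\widehat S_C\ge\eta_C/2\}$. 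The empirical-process gaps are $o_p(1)$ by a uniform law of large numbers over $f\in\calH$ (with complexity controlled by the norm bound/regularizer in \eqref{eq:ev2}) and $c\in[0,\bar c]$: the integrand family is uniformly bounded and Lipschitz in $(f,c)$—since $\cL$ is bounded and Lipschitz, $f\mapsto f(x)$ is governed by the reproducing property, and $|\tau-Y|\le h$—so its Rademacher complexity vanishes, and the coupling with the random $\widehat S_C$ is absorbed because $\widehat S_C$ lies in a shrinking sup-norm ball around $S_C$ with probability tending to one (cross-fitting would eliminate it outright). Collecting the bounds yields $V^2_\cL(\widetilde f,\widetilde c)-V^2_\cL(\widehat f,\widehat c)=o_p(1)$, and the two inequalities of the previous paragraph give the conclusion.

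I expect the main obstacle to be the compactification of the $c$-range. In the CVaR case $(\alpha-T)_+$ already traps the effective $\alpha$ in $[\inf T,h]$, whereas $\max(0,c(\tau-Y)+1)$ is not uniformly bounded in $c$, so one must lean on concavity in $c$ plus the boundedness and positivity assumptions to keep all maximizers in a compact set \emph{uniformly} over $f$ and over the estimated censoring functions. The a priori bound on the maximizing $c$ moreover degrades for rules $d$ with $\pr(T(d)<\tau)$ near $0$; for such rules $V^2_\tau(d)$ already equals (or nearly equals) its maximal value $1$, so these can be handled by a separate elementary argument—or precluded if \eqref{eq:ev2} restricts $c$ to a fixed, suitably large compact set, in which case $(\widetilde f,\widetilde c)$, $(f^\star,c^\star)$ and $(\widehat f,\widehat c)$ are all taken over that set. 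Once $[0,\bar c]$ is fixed, the remainder is a routine adaptation of the CVaR proof, the only extra bookkeeping being the joint control of the two randomness sources—the estimation error of $\widehat S_C$ and the empirical process over $\calH$.
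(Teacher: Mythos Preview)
Your proposal is correct and follows essentially the same route as the paper's proof: identify $\widetilde S_C=S_C$ so that $V'^{,2}=V^2$, invoke the excess-risk bound of Theorem~\ref{2:excess} to reduce to the surrogate gap $V^2_\cL(\widetilde f,\widetilde c)-V^2_\cL(\widehat f,\widehat c)$, and control that via the optimality inequality for $(\widehat f,\widehat c)$ together with plug-in and empirical-process terms. Your explicit discussion of confining $c$ to a compact interval is more careful than the paper, which simply invokes ``standard empirical process theory'' without addressing the unboundedness of $c$; in fact the paper's treatment of the empirical problem (Section~\ref{sec:algo}) only notes that the inner minimum over $c$ is attained at one of the knots $\{1/(Y_i-\tau)\}$, so your concavity-based compactification argument fills a gap the paper leaves implicit.
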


The proofs for all theorems are deferred to the supplementary material.

\section{Simulations}\label{sec:simu}

In this section, we conduct simulation studies under three scenarios. In each of them, we generate
$X\in\mathbb R^{3}$ with $X^{(1)}, X^{(2)}, X^{(3)}\sim\mathcal N(0,1)$ independently.
Treatment $A$ is assigned by a stochastic reference policy with propensity
\[
\Pr(A=1\mid X)=\mathrm{expit}\!\bigl(0.3X^{(1)}-0.6X^{(2)}\bigr),
\]
where $\mathrm{expit}(u)=\{1+\exp(-u)\}^{-1}$. Given $(X,A)$, we generate $\widetilde T$, set
$T=\min(\widetilde T,h)$ with $h=20$, generate the censoring time $C$,
and have $(X,A,Y,\Delta)$ with $Y=\min(T,C)$ and $\Delta=I(T\le C)$.

\st{\textit{Scenario 1.}
Let $\varepsilon\sim\mathcal N(0,1)$ and $\widetilde T$ follows an accelerated failure time model
\[
\log \widetilde T
=
- X^{(1)} + 0.2 X^{(2)} + 0.8 X^{(3)}
+
\bigl(X^{(1)} + 0.5 X^{(2)} - 1.5 X^{(3)}\bigr) I\{A=1\}
+
\varepsilon.
\]
Censoring follows a Cox proportional hazards model with baseline hazard
$\lambda_0(t)=\tfrac12 t^{-1/2}$ and
\[
\lambda_C(t\mid X,A)=\lambda_0(t)\exp\{ -1 -0.8X^{(1)}-0.8X^{(2)}+0.4X^{(3)}
+\bigl(0.6-0.5X^{(1)}+0.3X^{(2)}-0.5X^{(3)}\bigr)I\{A=1\}\},
\]
yielding approximately $15\%$ censoring.}

\st{\textit{Scenario 2.}
Let  $\varepsilon\sim \mathrm{Weibull}(0.5, 0.3)$ and $\widetilde T$ follows an accelerated failure time model
\[
\log \widetilde T
=
-1.2 + 2.4X^{(1)} + 1.8X^{(3)}
+
\bigl(1.2 - X^{(2)} + X^{(1)} - 0.6X^{(3)}\bigr) I\{A=1\}
+
\varepsilon.
\]
Censoring again follows a Cox model with the same baseline hazard
$\lambda_0(t)=\tfrac12 t^{-1/2}$ and
\[
\lambda_C(t\mid X,A)=\lambda_0(t)\exp\{-1.5 + X^{(1)} + \bigl(-0.5+1.8X^{(1)}-0.6X^{(2)}\bigr)I\{A=1\}\},
\]
yielding approximately $30\%$ censoring.}

\st{\textit{Scenario 3.}
Let $\varepsilon\sim\mathcal N(0,1)$ and consider a Cox proportional hazards model for generating $\widetilde T$ with baseline
$\lambda_0(t)=\tfrac12 t^{-1/2}$ and 
\(
\lambda_{\widetilde T}(t\mid A,X,\varepsilon)=\lambda_0(t)\exp\bigl(\eta_{\widetilde T}(A,X,\varepsilon)\bigr),
\)
where
\[
\eta_{\widetilde T}(A,X,\varepsilon)=0.25+0.25X^{(1)}-0.15X^{(3)}+0.10\{X^{(2)}-0.5\}
+ I\{A=1\}\{-0.15+0.8\,\ell(X)\}+0.35\,\varepsilon,
\]
and $\ell(X)=0.9X^{(1)}+0.2\{X^{(2)}-0.5\}-0.9(X^{(3)}+0.5)$.
Censoring again follows a Cox model with the same baseline hazard
$\lambda_0(t)=\tfrac12 t^{-1/2}$ and
\[
\lambda_C(t\mid A,X)=\lambda_0(t)\exp\bigl(-0.2-1.5X^{(1)}+0.5X^{(2)}+\bigl(1.2-0.6X^{(1)}-1.4X^{(2)}-0.2X^{(3)}\bigr)A\bigr),
\]
yielding approximately $45\%$ censoring.}

Under each scenario, we generate $100$ independent training data sets
$\{(Y_i,\Delta_i,A_i,X_i)\}_{i=1}^n$ with $n=1000$.
For each training set, we apply Algorithm~1 to learn optimal treatment rules under the CVaR and buffered criteria, respectively. Throughout, we estimate the censoring survival function $S_C(\cdot\mid X,A)$ using survival forests \citep{athey2019generalized,cui2020estimating}, and we restrict attention to linear decision rules.
For the choice of $\gamma$ and $\tau$, we consider $\gamma= 0.25, 0.5, 0.5$ and
$\tau=0.5, 0.6, 0.2$ in Scenarios 1-3, respectively.

To evaluate a learned rule $d$, we generate an independent test set
$\{X_j\}_{j=1}^{N}$ with $N=10000$ and find the corresponding
potential survival times $T_j(d)$ under the known data generating mechanism.
We then compute empirical analogues of the three target value functions $V(d)$, $V^1(d)$, and $V^2(d)$.
For example, for $V^1(d)$, we report
\(
\frac{1}{N}\sum_{j=1}^{N}
T_j(d)\,I\!\left\{T_j(d)\le \hat Q_{\gamma}\{T(d)\}\right\},
\)
where $\hat Q_{\gamma}\{T(d)\}$ is the empirical $\gamma$-quantile of
$\{T_j(d)\}_{j=1}^{N}$. The empirical value functions for $V(d)$ and $V^2(d)$ are defined
analogously.

\st{We compare our proposed methods with two baselines aimed at finding mean-optimal treatment rules: causal survival forests (CSF) \citep{cui2020estimating} and CSF-based outcome-weighted learning (CSF-O) \citep{zhao2012estimating}, as well as approaches tailored to other decision criteria, including quantile (QuL) \citep{wang2017quantile} and expected quantile (EQuL) \citep{xiao2019robust}. Figure~\ref{sim1} summarizes the empirical values of $V(d)$, $V^1(d)$, and $V^2(d)$ across methods and scenarios. As expected, CSF and CSF-O perform competitively for the mean
criterion $V(d)$. In contrast, our CVaR-based procedure consistently achieves the largest $V^1(d)$, and our method for the buffered
criterion attains the best performance for $V^2(d)$.}

\begin{figure}[ht]
  \centering
  
  \begin{subcaptionblock}{\textwidth}
    \centering
    \includegraphics[width=0.8\textwidth]{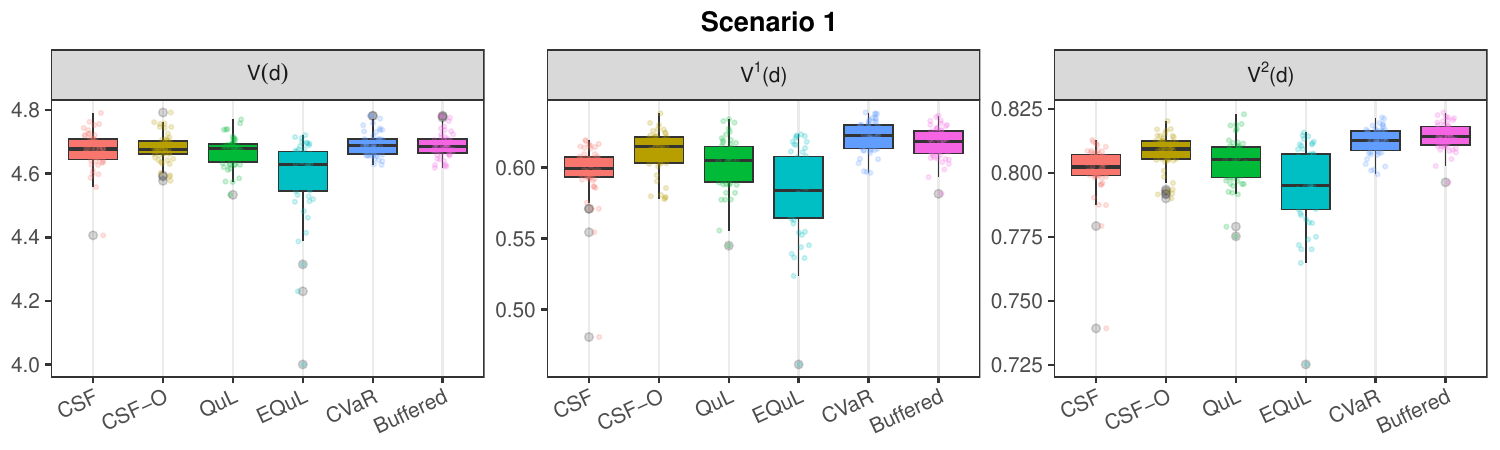}
  \end{subcaptionblock}
  
  \vspace{0.5em}
  
  %--------- Middle panel ---------%
  \begin{subcaptionblock}{\textwidth}
    \centering
    \includegraphics[width=0.8\textwidth]{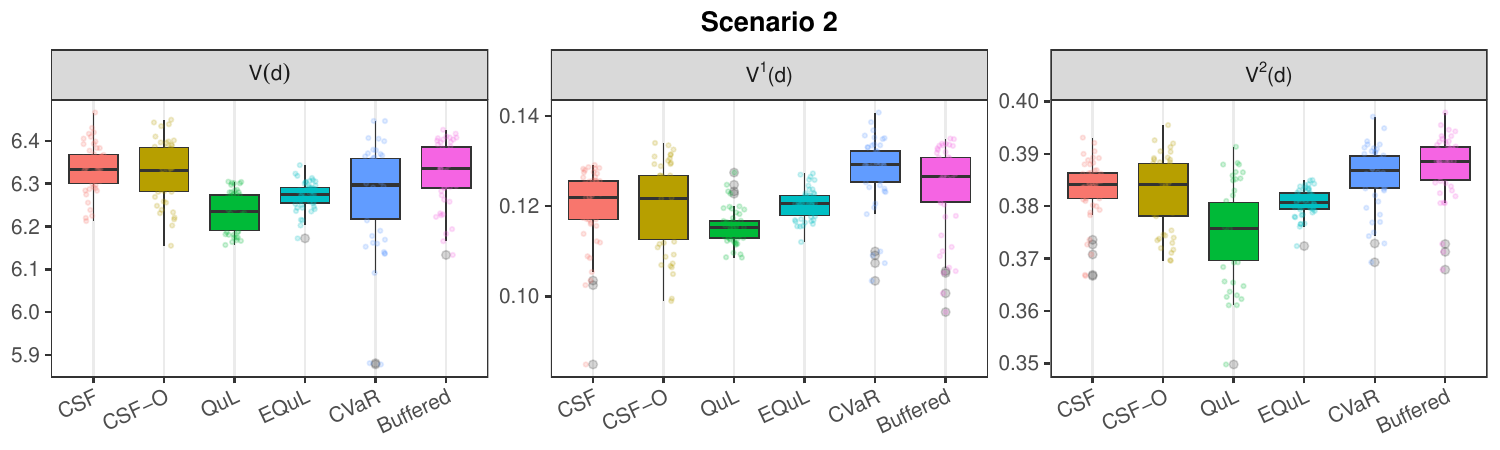}
  \end{subcaptionblock}
  
  \vspace{0.5em}
  
  %--------- Bottom panel ---------%
  \begin{subcaptionblock}{\textwidth}
    \centering
    \includegraphics[width=0.8\textwidth]{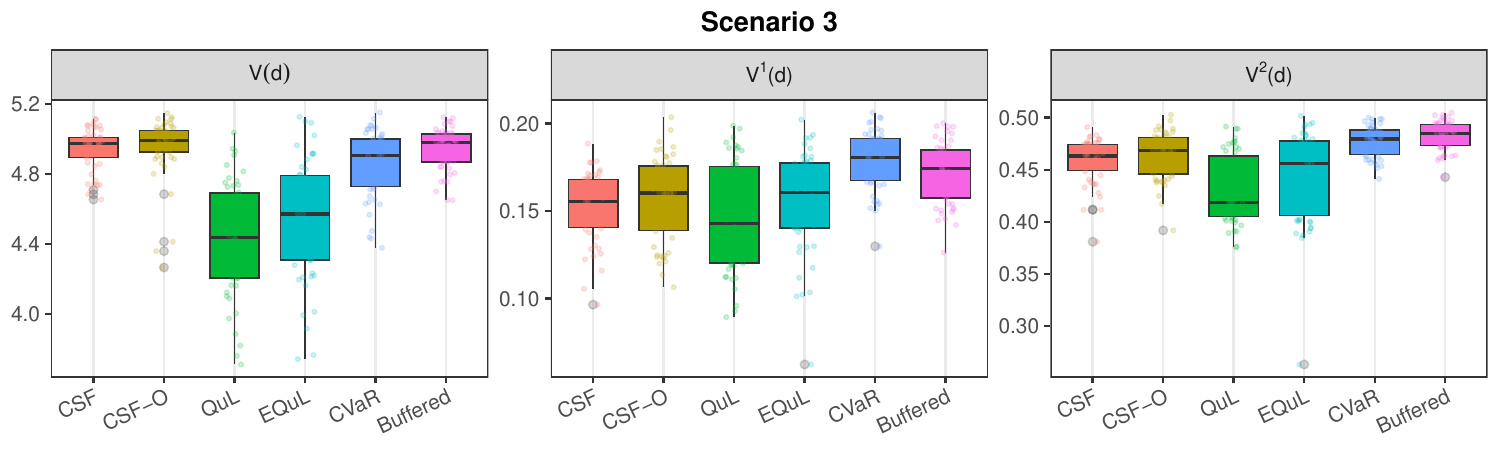}
  \end{subcaptionblock}

  \caption{Boxplots of empirical value functions.}
\label{sim1}
\end{figure}

\st{Additional discussions of interpretability, sensitivity analyses, and further simulation results are provided in the supplementary material.}

\section{Real data application}\label{sec:real}

We consider a randomized, double-blind, placebo-controlled trial AIDS Clinical Trials Group Protocol 175 (ACTG175) \citep{hammer1996trial} with survival time as a primary outcome. The original study designs four treatment groups with patients randomly allocated: one group receives 200 mg of zidovudine three times daily, one group is provided with 200 mg of zidovudine three times each day plus 0.75 mg of zalcitabine, one group is given 200 mg of zidovudine three times daily plus 200 mg of didanosine twice a day, while the remaining group is assigned 200 mg of didanosine twice daily.

Following \cite{cui2020estimating}, we focus on 1083 patients receiving treatment ZDV+ddI (%Treatment indicator 1, 
522 subjects) and ddI monotherapy (%Treatment indicator 0, 
561 subjects). The censoring percentage is around 21\%. The selected seven discrete baseline covariates include gender (904 male and 179 female), homosexual activity (725 yes and 358 no), race (772 white and 311 non-white), symptomatic status (192 symptomatic and 891 asymptomatic), history of intravenous drug use (142 yes and 941 no), hemophilia (92 yes and 991 no), and antiretroviral history (632 experienced and 451 naive). The selected five continuous baseline covariates are age, weight, Karnofsky score (scale of 0-100), CD4 count and CD8 count.

\st{Clinically, both patients at high risk of early failure and those with good long-term prognosis are of interest in HIV studies. Early mortality is strongly associated with uncontrolled viral replication, severe immunosuppression, and opportunistic infections, so treatment rules that reduce cases of very short survival are especially valuable. At the same time, modern antiretroviral therapy has transformed HIV into a chronic condition, and long-term survival and treatment durability have become key outcomes. It is therefore desirable to assess treatment rules not only by overall mean survival but also by how they behave in clinically important tails of the survival distribution, capturing both protection against early failure and support for sustained long-term survival. In addition, the empirical survival time distribution in ACTG175 exhibits a non-negligible fraction of early failures, which further motivates a tail-sensitive analysis.}

\st{Motivated by this, we consider both methods that primarily aim to maximize mean survival time, such as CSF and CSF-O, and our methods designed to optimize the CVaR and buffered criteria. For the choice of $(\gamma,\tau)$, we adopt
\(
\gamma \in \{0.25, 0.50\}, 
\tau \in \{460, 920\}.
\)
For the CVaR criterion, we choose $\gamma=0.25$ and $0.50$ to focus on roughly the worst-performing patients. For the buffered criterion, we take $\tau=460$ days, which is close to the first quartile of the survival times among subjects with $\Delta = 1$, so that the criterion indirectly reflects the performance of a rule with respect to early failure. The choice $\tau=920$ days (approximately 2.5 years) probes a more ambitious long-term prognosis target while still lying in a range where enough events are observed to allow reasonably stable estimation.}

We use repeated $5$-fold cross-validation to evaluate each learned rule. In each replication, the data are randomly partitioned into five folds; four folds are used to train each method and the remaining fold is used for evaluation. This is repeated over all five test folds, and the entire procedure is in turn repeated $50$ times. For a generic treatment rule $d$, we denote the resulting cross-validated estimators by
\begin{align*}
\widetilde V(d) 
&=  \mathbb{P}_{\text{rep}}  \left\{ \mathbb{P}_{\text{test}} \left[ 
  \frac{I\{A=d(X)\}}{\widehat{\pi}(A\mid X)} 
  \frac{\Delta Y}{\widehat{S}_C(Y\mid X,A)}
  \right] \right\}, \\[4pt]
\widetilde V^1_{\gamma}(d) 
&= \mathbb{P}_{\text{rep}} \left\{ \mathbb{P}_{\text{test}} \left[
  \frac{I\{A=d(X)\}}{\widehat{\pi}(A\mid X)} 
  \left\{
    \widehat{\alpha}_\gamma\,\gamma - 
    \frac{\Delta \big(\widehat{\alpha}_\gamma - Y\big) I\{\widehat{\alpha}_\gamma > Y\}}{\widehat{S}_C(Y\mid X,A)}
  \right\} \right] \right\}, \\[4pt]
\widetilde M^2_{\tau}(d) 
&= \mathbb{P}_{\text{rep}} \left\{ \mathbb{P}_{\text{test}} \left[
  \frac{I\{A=d(X)\}}{\widehat{\pi}(A\mid X)} 
  \frac{\Delta \,\max\big\{0,\widehat{c}_\tau(-Y+\tau)+1\big\}}{\widehat{S}_C(Y\mid X,A)}
  \right] \right\},
\end{align*}
where $\mathbb{P}_{\text{test}}$ denotes the empirical mean over the test fold, $\mathbb{P}_{\text{rep}}$ denotes the empirical mean over all repetitions. For each split, the nuisance components $\widehat S_C(\cdot|X,A)$ and $\widehat \pi(\cdot|X)$, as well as
$(\widehat\alpha_\gamma,\widehat c_\tau)$, are fit using only the corresponding training folds, and the nuisance components are then evaluated on the test fold when computing
$\widetilde V(d)$, $\widetilde V^1_\gamma(d)$, and $\widetilde M^2_\tau(d)$. \st{Since we only have offline data from the study, these $\widetilde V(d)$, $\widetilde V^1_\gamma(d)$ and $\widetilde M^2_\tau(d)$ provide attainable estimators of the corresponding value functions. In particular, $\widetilde V(d)$ is the usual estimator of mean survival under $d$, while $\widetilde V^1_\gamma(d)$ and $\widetilde M^2_\tau(d)$ approximate the identification results introduced in Section~\ref{subsec:rc}, evaluated at the chosen $(\gamma,\tau)$ grid. It is noteworthy that larger values of $\widetilde V(d)$ and $\widetilde V^1_{\gamma}(d)$ and smaller values of $\widetilde M^2_{\tau}(d)$ are preferred.}

\st{Table~\ref{tab:hiv_6rules} reports the cross-validated mean value $\widetilde V(d)$ together with the CVaR criteria ($\widetilde V^1_{0.25}(d)$ and $\widetilde V^1_{0.50}(d)$) and the buffered criteria ($\widetilde M^2_{460}(d)$ and $\widetilde M^2_{920}(d)$) for each rule. The CVaR criterion-oriented rules achieve the largest values of $\widetilde V^1_{0.25}(d)$ and $\widetilde V^1_{0.50}(d)$, indicating that, among patients in the lower tail of the survival distribution, they provide stronger protection. The buffered criterion-oriented rules exhibit lower $\widetilde M^2_{\tau}(d)$ values compared to those of the benchmarks. Taken together, these results suggest that our proposed framework can meaningfully reshape treatment allocation in a tail-sensitive direction while possibly preserving most of the gains in average survival.}

\begin{table}[!htbp]
    \centering
    \footnotesize
    \begin{threeparttable}
    \begin{tabular}{lccccc}
        \toprule
        Method 
        & $\widetilde V(d)$ 
        & $\widetilde V^1_{0.25}(d)$ 
        & $\widetilde V^1_{0.50}(d)$ 
        & $\widetilde M^2_{460}(d)$ 
        & $\widetilde M^2_{920}(d)$ \\
        \midrule
        CSF  
        & 634.30 
        & 293.51 
        & 414.14 
        & 0.283 
        & 0.823 \\
        
        CSF-O  
        & \bf{635.90} 
        & 303.54 
        & 420.45 
        & 0.293 
        & 0.835 \\
        
        CVaR ($\gamma = 0.25$) 
        & 632.40 
        & \bf{305.98} 
        & 421.67 
        & 0.291 
        & 0.826 \\
        
        CVaR ($\gamma = 0.50$) 
        & 632.93 
        & 305.86 
        & \bf{422.59}
        & 0.291 
        & 0.825 \\
        
        Buffered ($\tau = 460$) 
        & 631.75 
        & 298.80 
        & 415.67 
        & \bf{0.272}
        & 0.823 \\
        
        Buffered ($\tau = 920$) 
        & 631.17 
        & 304.43 
        & 418.76 
        & 0.285 
        & \bf{0.814} \\
        \bottomrule
    \end{tabular}
    \caption{Numerical results for the ACTG175. }
    \label{tab:hiv_6rules}
    \end{threeparttable}
\end{table}

\section{Discussion}\label{sec:discussion}
In this paper, we propose a novel framework to estimate the optimal treatment rules in order to control the tail of the subjects' survival time with right-censored data. The proposed methods target directly the truncated mean survival time and survival probabilities, which are of great interest in survival analysis. For the first criterion, we maximize the truncated mean survival time by choosing a user-specified quantile endpoint; for the second criterion, we maximize survival function at a quality adjusted survival time, where the quality adjusted survival time is defined by the first value function. Moreover, the proposed methods have intrinsic connections to CVaR and bPOE, thus the resulting optimal rules can potentially prevent adverse events.

The proposed methods admit several extensions. \st{First, as discussed in the supplementary
material, a multi-arm trial can be accommodated without changing the identification arguments; for estimation, we
discuss an angle-based multicategory representation and a smooth logistic
relaxation, while noting that constructing a multicategory DC surrogate is substantially more challenging
\citep{qiao2009adaptive,JMLR:v18:17-003,Zhou2018OutcomeWeightedLF,qi2019multi,xue2022multi,zhou2022offline}.}
Second, personalized dose finding with explicit tail control is a natural
direction \citep{chen2016dosefinding,zhou2021parsimonious,fang2023fairness}.
\st{Third, extending our criteria to multi-stage (e.g., mobile health) decision-making process will require longitudinal weighting to handle time-varying treatment
assignment and censoring, and will introduce additional non-smooth structure
that complicates scalable optimization \citep{zhao2015new,shi2018high,luckett2019,shi2022statistical,ye2024stage,zhou2024estimating}.}
Finally, it would be valuable to develop augmented inverse probability weighted,
doubly robust estimators for our proposed value functions \citep{robins1994estimation,zhang2012robust,zhao2015doubly,liu2018augmented}.

\linespread{1}\selectfont
\bibliographystyle{asa}
\bibliography{survitr}

\end{document}